\DeclareSIUnit\persecond{/s}
\newcommand{\Oh}[1]{\mathcal{O}\left({#1}\right)}
\newcommand{\oh}[1]{o\left({#1}\right)}
\newcommand{\Th}[1]{\Theta\left({#1}\right)}
\newcommand{\setN}{\mathbb{N}}
\newcommand{\setZ}{\mathbb{Z}}
\newcommand{\ceil}[1]{\lceil #1 \rceil}
\newcommand{\floor}[1]{\lfloor #1 \rfloor}
\newtheorem{theorem}{Theorem}
\newtheorem{corollary}{Corollary}
\newtheorem{lemma}{Lemma}
\newtheorem{definition}{Definition}
\newcommand{\LF}{\mathsf{LF}}
\newcommand{\A}{\mathsf{A}}
\newcommand{\SA}{\mathsf{SA}}
\newcommand{\SAs}{\SA_\mathsf{s}}
\newcommand{\SAPhi}{\SA_\Phi}
\newcommand{\IPhi}{I_{\Phi}}
\newcommand{\MDS}{\mathcal{M}}
\newcommand{\MLF}{\MDS^{\LF}}
\newcommand{\MPhi}{\MDS^{\Phi}}
\newcommand{\rank}{\mathsf{rank}}
\newcommand{\select}{\mathsf{select}}
\newcommand{\RSLap}{RS_{L'}}
\newcommand{\Ad}{A^d}
\newcommand{\Sarr}{\mathsf{S}}
\newcommand{\PL}{\mathsf{PL}}
\newcommand{\PS}{\mathsf{PS}}
\newcommand{\R}{\mathsf{R}}
\newcommand{\kmerset}{\mathcal{K}}
\newcommand{\ISA}{A^{-1}}
\newcommand{\SR}{\mathsf{SR}}
\newcommand{\CPL}{\mathsf{CPL}}
\newcommand{\SCP}{\mathsf{SCP}}
\newcommand{\LP}{\mathsf{LP}}
\newcommand{\PT}{\mathsf{PT}}
\newcommand{\xcp}{x_\mathsf{cp}}
\newcommand{\xlp}{x_\mathsf{lp}}
\newcommand{\HAd}{H^{\#}_{\Ad}}
\newcommand{\Ts}{\mathcal{T}_\mathsf{s}}
\newcommand{\Tg}{\mathcal{T}_\mathsf{g}}
\tikzset{
    scriptsize/.style={
        font=\scriptsize
    }
}
\pgfplotsset{
    compat=1.18,
    major grid style = { thin, dotted, color = black!50 },
    minor grid style = { thin, dotted, color = black!50 },
    grid,
    every tick label/.append style={font=\tiny},
    every axis label/.style={font=\scriptsize},
    every axis/.append style={
        cycle list name=my-colors,
    },
    xlabel near ticks,
    ylabel near ticks,
    legend pos=south east,
    legend cell align=left,
    legend columns=3,
    legend style={
        inner sep=0.25ex,
        outer sep=0,
        column sep=0,
        font=\scriptsize,
        anchor=north west,
        /tikz/every even column/.append style={column sep=3mm,black},
        /tikz/every odd column/.append style={black},
    },
}
\newcommand{\occ}{\text{occ}}
\newcommand{\polylog}{\text{polylog}}
\title{RLZ-$r$ and LZ-End-$r$: Enhancing Move-$r$} 
\author{
    Patrick Dinklage \and
    Johannes Fischer \and
    Lukas Nalbach \and
    Jan Zumbrink
}
\begin{document}

\maketitle

\begin{abstract}
In pattern matching on strings, a locate query asks for an enumeration of all the occurrences of a given pattern in a given text.
The $r$-index~[Gagie et al.,~2018] is a recently presented compressed self index that stores the text and auxiliary information in compressed space.
With some modifications, locate queries can be answered in optimal time [Nishimoto \& Tabei,~2021], which has recently been proven relevant in practice in the form of Move-$r$~[Bertram et al.,~2024].
However, there remains the practical bottleneck of evaluating function $\Phi$ for every occurrence to report.
This motivates enhancing the index by a compressed representation of the suffix array featuring efficient random access, trading off space for faster answering of locate queries [Puglisi \& Zhukova,~2021].

In this work, we build upon this idea considering two suitable compression schemes: Relative Lempel-Ziv~[Kuruppu et al.,~2010], improving the work by Puglisi and Zhukova, and LZ-End~[Kreft \& Navarro,~2010], introducing a different trade-off where compression is better than for Relative Lempel-Ziv at the cost of slower access times.
We enhance both the $r$-index and Move-$r$ by the compressed suffix arrays and evaluate locate query performance in an experiment.

We show that locate queries can be sped up considerably in both the $r$-index and Move-$r$, especially if the queried pattern has many occurrences.
The choice between two different compression schemes offers new trade-offs regarding index size versus query performance.
\end{abstract}

\section*{Disclaimer}
This is the full version of the paper of the same title to be published with the \emph{String Processing and Information Retrieval} (SPIRE) conference in 2025 (CITATION PENDING).
It contains additional details that had to be omitted from the conference paper due to space constraints.

\section{Introduction}
\label{sect:intro}

Searching for occurrences of a pattern in a text or a collection of texts is a ubiquitous problem.
A common use case is to query different patterns against the same text, e.g., picture different users searching for different terms on (a snapshot of) the internet or DNA reads being matched in a genomic database.
This scenario is typically tackled by building an \emph{index} on the text, a data structure that allows for efficient pattern matching queries.
Since the text can be prohibitively large to be indexed plainly, we are very much interested in compressed indexes.
Arguably an important milestone in this area was the invention of the $r$-index by Gagie et al.~\cite{DBLP:conf/soda/GagieNP18} that can be stored in space $\Oh{r}$, where $r$ is the number of runs in the text's Burrows-Wheeler transform -- a well-accepted measure of compressibility.
Augmented by the move data structure by Nishimoto et al.~\cite{DBLP:conf/icalp/NishimotoT21}, pattern matching queries can be answered in optimal time.
Bertram et al.~\cite{DBLP:conf/wea/Bertram0N24} recently implemented this and presented Move-$r$, achieving a very good practical time/space trade-off.

We differentiate between two types of queries:
while \emph{count} queries tell us how often a pattern occurs in the text, \emph{locate} queries ask for an enumeration of all positions at which the pattern occurs.
For this, in the $r$-index, we need to evaluate a function $\Phi$ for every occurrence, which turns out to be the main performance bottleneck in practice.
In independent work (Move-$r$ was not around yet), Puglisi and Zhukova~\cite{DBLP:conf/spire/PuglisiZ20,DBLP:conf/dcc/PuglisiZ21} considered storing a compressed representation of the suffix array alongside the $r$-index that features efficient random access.
For locate queries, we can now directly decode the relevant portion of the suffix array instead of evaluating $\Phi$ for every step.
This resulted in a new trade-off where locate queries could be answered much faster, at the cost of having to store a compressed representation of the suffix array alongside the index.

\subsubsection{Our contributions.}
We transfer the idea of~\cite{DBLP:conf/dcc/PuglisiZ21} and explore enhancing Move-$r$ by a compressed representation of the suffix array with efficient random access, expecting this to be a practical trade-off for speeding up locate queries.
For this, we consider two different compression schemes.
First, like~\cite{DBLP:conf/dcc/PuglisiZ21}, we consider Relative Lempel-Ziv, where we greatly improved the reference construction as well as the parsing procedure.
While the source code of~\cite{DBLP:conf/dcc/PuglisiZ21} remains closed, we publish our reimplementation under an open source license.
Second, we consider LZ-End~\cite{DBLP:conf/dcc/KreftN10}, a different Lempel-Ziv compression scheme that allows for efficient random access.
Here, we give a competitive generalized and simplified algorithm to compute the LZ-End parsing of a string over an integer alphabet based on the (suboptimal) $\Oh{n \lg\lg n}$-time algorithm of Kempa and Kosolobov~\cite{DBLP:conf/esa/KempaK17}.
We also improve Move-$r$ itself by engineering internal rank and select data structure.

We implement different variations of the $r$-index and Move-$r$ and show trade-offs between index size and query performance in our experiments.

\section{Preliminaries}
\label{sect:prelims}

Let $\Sigma \subseteq \setN$ be an integer alphabet and $T \in \Sigma^n$ a \emph{text} over $\Sigma$ of length $n$.
In this work, we are interested in pattern matching queries asking for occurrences of a given pattern $P \in \Sigma^m$ of length $m$ in $T$.
Particularly, we are interested in the queries
(a)~\emph{count}, asking for the number \emph{\occ{}} of occurrences of $P$ in $T$, and
(2)~\emph{locate}, asking for an enumeration of the starting positions of the occurrences.
For some $i \in [1,n]$, we denote by $T[i]$ the $i$-th character in $T$.
Given additionally $j \in [i, n]$, we denote by $T[i ~..~ j]$ the substring $T[i]~T[i+1] \cdots T[j-1]~T[j]$, juxtaposition meaning concatenation.
The aforementioned queries are formally defined as $\text{locate}(T, P) = \{ i \in [1, n-m+1] ~|~ T[i ~..~ i+m-1] = P \}$ and $\text{count}(T, P) = |\text{locate}(T, P)|$.
We argue about running times in the word RAM model, where we can do operations on words of length $\omega = \Th{\lg n}$ bits in constant time.
Unless explicitly stated otherwise, logarithms are given as base-two.

\subsection{Lempel-Ziv Parsings and Random Access}
\label{sect:lz}

Lempel-Ziv (LZ) parsings factorize a text $T$ into $z \leq n$ \emph{phrases} $f_1, \dots, f_z$ such that their concatenation $f_1 \cdots f_z = T$.
They form a family of dictionary compressors, with arguably the most popular representative being Lempel-Ziv~77 (LZ77)~\cite{DBLP:journals/tit/ZivL77}.
There, we define the phrase $f_i$ (for $i \in [1,z]$) as either
(1)~a new symbol that does not occur in $T[1 ~..~ |f_1 \cdots f_{i-1}|]$, or
(2)~the longest prefix of $T[|f_1 \cdots f_{i-1}| + 1 ~..~ n]$ that has an occurrence in $T$ starting at a position $\leq |f_1 \cdots f_{i-1}|$.
In the second case, we can encode the phrase as a \emph{reference} to a previous occurrence, which can potentially be stored in less bits than storing the phrase explicitly.
Even if we relax the definition of referencing phrases, it is typically $z \ll n$ if $T$ is repetitive.
Thus, LZ parsings are a popular choice for compressing $T$ and are used in myriad everyday utilities such as \emph{gzip}.

\subsubsection{Random access.}

We are interested in efficient random access on $T$ in its compressed form, i.e., we wish to extract a substring $T[x ~..~ x+\ell]$ for some $x \in [1,n]$ and $\ell \geq 0$ without decoding substantial portions of $T$.
Regarding just the character $T[x]$, it is contained in the phrase $f_i$ for $i = \min\{ i \in [1,z] ~\vert~ |f_1 \cdots f_i| \geq x\}$.
We say that phrase $f_i$ \emph{covers} position $x$.
We can store the set $E = \{ |f_1 \cdots f_j| ~\vert~ j \in [1,z] \}$ (the end positions of the phrases) in $z\ceil{\lg n}$ bits (i.e., in space $\Oh{z}$) and compute $i$ in time $\Oh{\lg z}$ using binary search, or use a static successor data structure to compute $i$ in time $\Oh{\lg\lg(n/z)}$~\cite{DBLP:conf/stoc/PatrascuT06}.
\footnote{
    Alternatively, we can build the characteristic bit vector $B_E$ of $n$ bits where the $j$-th bit is set iff $j \in E$.
    Since exactly $z$ bits are set in $B_E$, we can build a data structure of size $\ceil{\lg\binom{z}{m}} + \oh{n} + \Oh{\lg\lg z}$ bits that supports constant-time rank and select queries on $B_E$~\cite{DBLP:conf/soda/RamanRR02}.
    With this, we can then also compute $i$ in constant time.
    In this work, however, our aim is to focus on compressed space $\Oh{z}$.
}
An inherent disadvantage of the classic LZ77 scheme defined above, albeit achieving very good compression in practice, is that random access cannot be done efficiently.
Phrases may refer to arbitrary prior positions in $T$, and thus to decode $T[x]$, we may have to decode all phrases $f_1, \dots, f_i$ up to (a prefix of) the phrase that covers $x$.
We now look at two variants that resolve this issue at the cost of worse compression.

\subsubsection{LZ-End.}
\label{sect:lzend}

Kreft~and~Navarro introduced the scheme \emph{LZ-End}~\cite{DBLP:conf/dcc/KreftN10}.
Here, each phrase $f_i$ is represented as a triple $(j, \ell, \alpha)$, where $j < i$ is the \emph{source phrase}, $\ell \geq 0$ is the \emph{copy length} and $\alpha \in \Sigma$ is a character such that $f_i = T[|f_1 \cdots f_j| - \ell + 1 ~..~ |f_1 \cdots f_j|] ~ \alpha$ for maximal possible $\ell$ and there is no $k < i$ such that $f_i$ is a suffix of $T[1 ~..~ |f_1 \cdots f_k|]$.
We allow $f_0 := \epsilon$ as a valid source phrase such that the above is well-defined.
Intuitively, $f_i$ extends the length-$\ell$ suffix of $T[1 ~..~ |f_1 \cdots f_j|]$ by a new character $\alpha$ and $j$ is picked greedily such that $\ell$ is maximized.

Since each phrase adds exactly one character to a previously occurring substring, the end position of which is explicitly stated in the encoding triple, we can decode $T[x ~..~ x+\ell]$ in time $\Oh{h+\ell}$ once we know the phrase that covers position $x+\ell$.
Here, $h$ is the length of the longest phrase. 
This gives us total random access time $\Oh{\lg\lg(n/z_{\text{end}}) + h + \ell}$, where $z_\text{end} = \Oh{z \lg^2 n}$ is the number of LZ-End phrases of $T$~\cite{DBLP:conf/soda/KempaS22}.

When computing the parsing, we can artificially constrain $h$ to obtain parameterized random access time at the cost of introducing only at most $\Oh{(n \lg n)/h}$ additional phrases~\cite{DBLP:conf/soda/KempaS22} ($\Oh{n/h}$ of which come naturally from a phrase length restriction of $h$).
\footnote{
    We thank the anonymous reviewer for the insightful comments on the behaviour of LZ-End under phrase length restrictions.
}

\subsubsection{Relative Lempel-Ziv.}
\label{sect:rlz}

Kuruppu~et~al. proposed a variant of Lempel-Ziv parsings where we do not refer to earlier parts of $T$ itself, but instead to a given reference $R \in \Sigma^*$ \cite{DBLP:conf/spire/KuruppuPZ10}.
This is useful especially in scenarios where we want to store a collection of texts that are highly similar (e.g., genomic sequences from the same species).
Formally, the phrase $f_i$ is the longest prefix of $T[|f_1 \cdots f_{i-1}| + 1 ~..~ n]$ that occurs in $R$, or a single character that does not occur in $R$.
This scheme is referred to as \emph{Relative Lempel-Ziv} (RLZ).

To decode $T[x ~..~ \ell]$, we can directly access the substring in $R$ that the phrase covering $x$ refers to, which can be done in total time $\Oh{\lg\lg(n/z_R) + \ell}$, where $z_R$ is the number of RLZ phrases of $T$ for reference $R$.
The compression depends on how well $R$ represents $T$, and $R$ must be stored alongside the compressed form of $T$ in order to be able to decode $T$.

\subsection{Suffix Arrays, Burrows-Wheeler Transform and Compression}
\label{sect:csa}

In the \emph{suffix array} $A$ of $T$, we store the starting positions of the suffixes of $T$ in their lexicographical order~\cite{DBLP:journals/siamcomp/ManberM93}.
This ordering causes suffixes that begin with equal prefixes to be grouped in consecutive intervals.
A text book algorithm to answer count queries in time $\Oh{m \lg n}$ finds the interval $[b,e] \subseteq [1,n]$ of $A$ that contains all (and only the) suffixes of $T$ beginning with $P$, the query time stemming from binary searches for $b$ and $e$, respectively.
To answer locate, we simply need to enumerate $A[b ~..~ e]$.
We can store $A$ in $n\ceil{\lg n}$ bits of space and construct it in time $\Oh{n}$~\cite{DBLP:journals/tc/NongZC11}.

The \emph{Burrows-Wheeler transform} (BWT) of $T$ is a reversible transform of $T$ defined as $L[i] := T[A[i]-1]$ (or $L[i] := T[n]$ if $A[i] = 1$)~\cite{bwt}.
The BWT of repetitive texts tends to contain long equal-letter runs, which can be exploited by run-length compression.
We denote by $r$ the number of these runs.

\subsubsection{Compressed Differential Suffix Arrays.}

In practice, storing $A$ plainly is prohibitive for large $T$.
Even though it is a permutation over $[1,n]$ and thus not inherently compressible, different ways to compress $A$ have been shown (we refer to \cite{DBLP:journals/tcs/Grossi11} for an overview).
In this work, we focus on compressing the \emph{differential suffix array} $A^d \in \setZ^n$, where $A^d[1] := A[1]$ and $A^d[i] := A[i]-A[i-1]$ for $i \in [2, n]$.

Gonz\'{a}lez~et~al. first exploited the interesting property that the number of distinct values in $A^d$ is bounded by the number $r$ of BWT runs and repetitiveness in $T$ implies repetitiveness in $A^d$~\cite{DBLP:journals/jea/GonzalezNF14}.
In this work, we are interested in the approach by Puglisi and Zhukova~\cite{DBLP:conf/dcc/PuglisiZ21}, who instead considered RLZ to compress $A^d$.
They describe a strategy to extract $R$ from $A^d$ by selecting segments based on the frequencies of representative substrings, and show that this outperforms using a random sample of $A^d$ (for which bounds on the expected compression have been shown~\cite{DBLP:conf/spire/GagiePV16}).
We denote by $\hat{z}_R$ the number of RLZ phrases of $A^d$ computed this way.

For random access on $A$, we want to avoid having to compute $A[x] = \sum_{i=1}^x A^d[i]$ for some $x \in [1,n]$ in time $\Oh{n}$.
Rather, we create a sample $A'$ that contains a subsequence of $A$.
Let $y$ be the greatest sampled position $\leq x$, then we can compute $A[x] = A'[y] + \sum_{i=y+1}^x A^d[i]$ in time $\Oh{\delta}$, where $\delta$ is the maximum distance between any position and the previous sample.

For example, in \cite{DBLP:conf/dcc/PuglisiZ21}, we take a sample of $A$ for every RLZ phrase.
This gives us $\delta < h$ and using a static successor data structure of size $\Oh{\hat{z}_R}$ (since $|A'| = \hat{z}_R$), random access is possible in time $\Oh{\lg\lg(n/\hat{z}_R) + h}$, where $h$ is the length of the longest RLZ phrase.

\subsection{(Move-)$r$-Index}
\label{sect:rindex}

The $r$-index is a recent advancement in compressed data structures for pattern matching~\cite{DBLP:conf/soda/GagieNP18} that is also highly relevant in practice.
It is a self-index that encodes the BWT of $T$ and auxiliary data structures in $\Oh{r}$ space.
Using the move data structure of~\cite{DBLP:conf/icalp/NishimotoT21} and assuming $|\Sigma| = \Oh{\polylog~n}$, we obtain optimal $\Oh{m}$ time for count and optimal additional time $\Oh{\occ{}}$ for locate queries, where $\occ{}$ is the number of occurrences of the search pattern.

\section{LZ-End Compression of Suffix Arrays}
\label{sect:lzendsa}

Following the idea of \cite{DBLP:conf/dcc/PuglisiZ21} to apply LZ compression on the differential suffix array, we explore its compression using LZ-End.
To give an intuition as to why this may be fruitful, LZ-End
(1)~allows for efficient random access on the compressed input and
(2)~achieves competitive compression in practice.
We first show the following for compressing any integer sequence $A$.
\begin{theorem}
\label{thm:lzend-array}

Let $A \in [1,n]^n$ be an integer sequence.
In time and space $\Oh{n}$, we can construct a data structure of size $\Oh{\hat{z}_\text{end}}$ such that for $x \in [1,n]$ and $\ell \geq 0$, we can reconstruct $A[x ~..~ x+\ell]$ in time $\Oh{\lg\lg(n/\hat{z}_\text{end}) + h + \ell}$, where $\hat{z}_\text{end}$ is the number of LZ-End phrases of the differential representation $A^d$ of $A$ and $h$ the length of the longest phrase.
\begin{proof}
The differential representation $A^d$ of $A$ can be computed in time and space $\Oh{n}$ and by~\cite{DBLP:conf/esa/KempaK17}, the same holds for the LZ-End parsing of $A^d$.
We represent the triples defining the parsing as three arrays:
\begin{enumerate}
\item the array \emph{src}, where the $i$-th entry contains the number $\in [1,i-1]$ of the source phrase that $f_i$ refers to,
\item the array \emph{end}, where the $i$-th entry contains the position $|f_1 \cdots f_i| \in [1,n]$ at which phrase $f_i$ ends in $A^d$, and
\item the array \emph{ext}, where the $i$-th entry contains the value $\in [-n,n]$ from $A^d$ that extends the suffix of $A^d[1 ~..~ |f_1 \cdots f_{\textbf{src}[i]}|]$.
\end{enumerate}
Each array can be stored in space $\Oh{\hat{z}_\text{end}}$.
We also build a static successor data structure over \emph{end} that allows for successor queries in time $\Oh{\lg\lg(n/\hat{z}_\text{end})}$, which we can do in time and space $\Oh{\hat{z}_\text{end}}$.
Finally, in time at most $\Oh{\hat{z}_\text{end}}$, we sample the $\hat{z}_\text{end}$ values from $A$ at the positions stored in \emph{end} in a new array $A'$ and store them also in space $\Oh{\hat{z}_\text{end}}$.

Given $x \in [1,n]$ and $\ell \geq 0$, we decode $A[x ~..~ x+\ell]$ as follows:
we first extract the range $A^d[x ~..~ x+\ell]$ from the LZ-End parsing in time $\Oh{\lg\lg(n/\hat{z}_\text{end}) + h + \ell}$ using the extraction algorithm from~\cite{DBLP:conf/dcc/KreftN10} (the length of a phrase $f_i$ can trivially be computed in constant time as $|f_i| = \text{end}[i] - \text{end}[i-1]$).
Using the successor data structure, we can find in time $\Oh{\lg\lg(n/\hat{z}_\text{end})}$ the position of a relevant sample of $A$ that is stored in $A'$.
Then, in time at most $\Oh{h + \ell}$, we accumulate the relevant differential values from $A^d[x ~..~ x+\ell]$ to reconstruct $A[x ~..~ x+\ell]$.
\end{proof}
\end{theorem}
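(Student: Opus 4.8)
The plan is to assemble the data structure from components whose individual complexities are all established earlier in the excerpt, so the bulk of the argument is bookkeeping rather than new ideas. First I would reduce the problem to the LZ-End parsing of the differential array $A^d$. Since $A \in [1,n]^n$, the differences $A^d[i] = A[i] - A[i-1]$ lie in $[-n,n]$, and $A^d$ is computed in time and space $\Oh{n}$ by a single pass. By~\cite{DBLP:conf/esa/KempaK17}, the LZ-End parsing of $A^d$ is computable in time and space $\Oh{n}$; denote by $\hat{z}_\text{end}$ the resulting number of phrases and $h$ the length of the longest one. This already hands us the time/space construction bound; the remaining work is to show the parsing can be stored in $\Oh{\hat{z}_\text{end}}$ space while still supporting the claimed extraction.

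Next I would spell out the encoding as three arrays, $\mathit{src}$, $\mathit{end}$, $\mathit{ext}$, holding respectively the source-phrase index in $[1,i-1]$, the end position $|f_1\cdots f_i| \in [1,n]$, and the extending value in $[-n,n]$; each has $\hat{z}_\text{end}$ entries of $\Oh{\lg n}$ bits, hence $\Oh{\hat{z}_\text{end}}$ space. I would then add two auxiliary structures. The first is a static successor data structure over the end positions stored in $\mathit{end}$, which by~\cite{DBLP:conf/stoc/PatrascuT06} occupies $\Oh{\hat{z}_\text{end}}$ space and answers successor queries in time $\Oh{\lg\lg(n/\hat{z}_\text{end})}$; it is built in time and space $\Oh{\hat{z}_\text{end}}$. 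The second is a sampled array $A'$ of length $\hat{z}_\text{end}$ storing the actual $A$-values at the phrase end positions, which is filled in one linear pass over $A$ (costing $\Oh{n}$ during construction, which is within budget) and stored in $\Oh{\hat{z}_\text{end}}$ space. Note that phrase lengths are available in constant time as $|f_i| = \mathit{end}[i] - \mathit{end}[i-1]$, which the LZ-End extraction routine needs.

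For the query, given $x$ and $\ell$, I would proceed in three stages. First, locate the phrase covering $x+\ell$ via one successor query on $\mathit{end}$ in time $\Oh{\lg\lg(n/\hat{z}_\text{end})}$, then run the LZ-End extraction algorithm of~\cite{DBLP:conf/dcc/KreftN10}, which reconstructs $A^d[x~..~x+\ell]$ in time $\Oh{h+\ell}$ — here is where the structural guarantee of LZ-End matters, since each phrase adds exactly one fresh character to a previously seen substring whose end position is explicit, so chasing source pointers costs only $\Oh{h}$ overhead per decoded block rather than unbounded recursion. Second, a successor query on $\mathit{end}$ finds the largest sampled position $y \le x$, letting us read $A[y]$ from $A'$. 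Third, accumulate $A[x] = A'[\cdot] + \sum A^d[\,\cdot\,]$ over the gap (of length at most $h$) and then prefix-sum forward through the already-decoded $A^d[x~..~x+\ell]$ to obtain $A[x~..~x+\ell]$, in time $\Oh{h+\ell}$. Adding the three stages gives the claimed $\Oh{\lg\lg(n/\hat{z}_\text{end}) + h + \ell}$.

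The only genuine obstacle is verifying that the LZ-End extraction of an arbitrary range $A^d[x~..~x+\ell]$ — not just a range ending at a phrase boundary — runs within $\Oh{h+\ell}$; one must argue that decoding the partial phrase covering $x+\ell$, and likewise handling the partial phrase at the left end, does not blow up the recursion depth beyond a constant number of phrase-length-bounded hops. This is exactly the property established in~\cite{DBLP:conf/dcc/KreftN10,DBLP:conf/soda/KempaS22} and already quoted in Section~\ref{sect:lzend} of this paper, so I would simply cite it rather than reprove it. Everything else is routine composition of the stated bounds.
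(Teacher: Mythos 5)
Your proposal is correct and follows essentially the same route as the paper's proof: compute $A^d$ and its LZ-End parsing in $\Oh{n}$ time and space via~\cite{DBLP:conf/esa/KempaK17}, store the parsing as the three arrays \emph{src}, \emph{end}, \emph{ext} plus a static successor structure and a sample $A'$ of $A$ at phrase end positions, and answer queries by extracting $A^d[x~..~x+\ell]$ with the Kreft--Navarro extraction routine and accumulating differences from the nearest sample. The only cosmetic differences are that the paper fills $A'$ in $\Oh{\hat{z}_\text{end}}$ time by direct random access rather than a linear pass, and that your closing remark about partial-phrase extraction is handled in the paper simply by citing the extraction algorithm of~\cite{DBLP:conf/dcc/KreftN10}.
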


\begin{corollary}
\label{corr:lzendsa}
Let $T \in \Sigma^n$ be a string of length $n$.
In time and space $\Oh{n}$, we can construct a data structure of size $\Oh{\hat{z}_\text{end}}$ such that for $x \in [1,n]$ and $\ell \geq 0$, we can compute the suffix array interval $A[x ~..~ x +\ell]$ in time $\Oh{\lg\lg(n/\hat{z}_\text{end}) + h + \ell}$, where $\hat{z}_\text{end}$ is the number of LZ-End phrases of the differential representation $A^d$ of the suffix array $A$ of $T$ and $h$ is the length of the longest phrase.
\end{corollary}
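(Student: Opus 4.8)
The plan is to reduce the corollary directly to Theorem~\ref{thm:lzend-array} by observing that the suffix array $A$ of $T$ is itself an integer sequence in $[1,n]^n$. First I would construct $A$ from $T$ in time and space $\Oh{n}$ using a linear-time suffix array construction algorithm, e.g.\ the one of~\cite{DBLP:journals/tc/NongZC11} cited in Section~\ref{sect:csa}. Since $A$ is a permutation of $[1,n]$, in particular $A \in [1,n]^n$, it satisfies the hypothesis of Theorem~\ref{thm:lzend-array}.

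Applying Theorem~\ref{thm:lzend-array} to this $A$ then yields, in time and space $\Oh{n}$, a data structure of size $\Oh{\hat{z}_\text{end}}$ that reconstructs any range $A[x ~..~ x+\ell]$ in time $\Oh{\lg\lg(n/\hat{z}_\text{end}) + h + \ell}$, where $\hat{z}_\text{end}$ and $h$ refer to the LZ-End parsing of the differential representation $A^d$ of $A$ -- exactly the quantities named in the corollary. Since the output of a count query is precisely a suffix array interval $A[b ~..~ e]$, i.e.\ a range of the above form, this is all that is needed; after the $\Oh{\hat{z}_\text{end}}$-space structure is built, the text $T$ and the explicit array $A$ can be discarded, so the final space is $\Oh{\hat{z}_\text{end}}$ while peak construction space remains $\Oh{n}$.

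The only points that warrant a brief check are bookkeeping: the differential values $A^d[i] = A[i] - A[i-1]$ lie in $[-(n-1), n-1] \subseteq [-n,n]$, matching the range assumed for the \emph{ext} array in the proof of Theorem~\ref{thm:lzend-array}, and the sampled array $A'$ is populated from the (still available) explicit $A$ during construction. I do not expect any real obstacle here: the technical content lives entirely in Theorem~\ref{thm:lzend-array}, and the corollary merely records what that theorem gives once specialized to $A$ being the suffix array of $T$.
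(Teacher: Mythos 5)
Your proposal is correct and matches the paper's (implicit) intent exactly: the corollary is stated without its own proof precisely because it is the immediate specialization of Theorem~\ref{thm:lzend-array} to the suffix array $A$ of $T$, which can be built in linear time and space via~\cite{DBLP:journals/tc/NongZC11} and is a sequence in $[1,n]^n$. Your bookkeeping checks (range of $A^d$, discarding $T$ and $A$ after construction) are the right ones and raise no issues.
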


\subsection{Practical LZ-End Parsing}
\label{sect:lzend-s}

To implement the computation of LZ-end parsings, we adopt and modify the algorithm by Kempa and Kosolobov~\cite{DBLP:conf/esa/KempaK17} that does so in time $\Oh{n \lg\lg n}$ in a left-to-right scan of the text $T$ of length $n$ (a linear-time algorithm exists~\cite{DBLP:conf/esa/KempaK17}, but we conjecture it to be hardly practical).
At its core lies a dynamic predecessor/successor data structure $M$ that marks the lexicographic ranks of suffixes of the reverse input $\overleftarrow{T}$ at which already computed phrases end.
In the following, we briefly describe our modifications and refer to Appendix~\ref{appendix:lzend-s-ext} for details.

First, we make $M$ associative, so that at each marked suffix, we also store the number of the phrase that ends at the suffix.
This removes a level of indirection and even allows us to completely discard the suffix array after initialization.

Second, say that $f_1 \cdots f_x$ is the LZ-End parsing computed thus far for $x < z_{\text{end}}$ and we now read the next character $\alpha \in \Sigma$.
One of the possible cases for the next LZ-end phrase is merging the phrases $f_{x-1}$ and $f_x$ to a new phrase $f_{x-1} f_x \alpha$.
In this case, clearly, we cannot use phrase $x-1$ as a source phrase to copy from.
In the original algorithm, we temporarily unmark phrase $x-1$ in $M$ to guarantee that it is not reported as a possible a source phrase, and then afterwards mark it back in $M$ unless it was merged with phrase $x$.
We reduce the overall number of updates on $M$ by performing an additional predecessor or successor query in case $x-1$ is reported, and only ever unmark a phrase in $M$ when it is actually removed.

Third, the parsing is computed processing $T$ left to right, but suffixes of the reverse $\overleftarrow{T}$ are considered.
We save arithmetic computations by using a variant $A^{\leftarrow 1}$ of the inverse suffix array of $T$ that is defined as $A^{\leftarrow 1}[A[n-i-1]] := i$.
Then, it is $A^{\leftarrow 1}[i] = A^{-1}[n-i]$ and we read $A^{\leftarrow 1}$ left to right as we process $T$.

Finally, our implementation of the algorithm is written in a way that $\Sigma$ may be an arbitrary integer alphabet such that, e.g., we can compute the parsing for a differential suffix array.

We set the maximum phrase length to $h := 2^{13}$, giving us the best balance in preliminary experiments: choosing lower $h$ notably increased the number of phrases, whereas higher---or unbounded---$h$ had an increasingly negative impact on access performance.
Furthermore, we store the array \emph{end} of end positions plainly using $z\ceil{\lg n}$ bits and use a simple $\Oh{\lg z}$-time binary search with no auxiliary data structure to find the phrase covering a position in question.
Preliminary experiments have shown that despite its simplicity, this approach is the fastest given the relatively low number $z$.

In Appendix~\ref{appendix:lzend-s-ext}, we present results of experiments showing that our implementation of LZ-End is competitive with the \emph{lz-end-toolkit} and faster on general (non-highly repetitive) inputs.

\section{Improved RLZ Compression of Suffix Arrays}
\label{sect:rlzsa}

Puglisi and Zhukova~\cite{DBLP:conf/dcc/PuglisiZ21} considered compressing the differential suffix array using Relative Lempel-Ziv (henceforth referred to as RLZSA).
However, their source code remains closed.
With the aim of reproducing their results for further research, we reimplemented RLZSA as described there and in Zhukova's doctoral thesis~\cite{DBLP:phd/basesearch/Zhukova24} to the best of our capabilities.
In this process, we found several clues as to how to improve upon their work.
We summarize our improvements here and refer to Appendix~\ref{appendix:rlzsa} for an in-depth description of the individual steps.

First, we shrink the overall representation by separating data pertaining to literal phrases (encoding explicitly a character $\alpha \in \Sigma$) or copying RLZ phrases (encoding the source position in $R$ and the number of characters to copy).
We store a bit vector of length $z_R$ with support for rank/select that allows us to identify the type of each phrase and use rank and select queries to access its data in separate arrays.
This allows us to drop the requirement that each copying phrase needs to be preceded by a literal phrase, which also improves RLZ compression.
Using the new representation, we can reduce the time for randomly accessing a suffix array interval $A[b ~..~ e]$ from $\Oh{|e-b| + h + \lg(z_R/a) + a}$ to $\Oh{|e-b| + \lg(na/z_R) + a}$ (see Appendix~\ref{appendix:rlzsa_data_structures}~and~\ref{appendix:rlzsa_queries}),
where $a \geq 1$ is an integer sampling parameter (for sampling phrase starting positions) and $h$ is the length of the longest RLZ phrase.

We then proceed to improve upon the construction of RLZSA.
By using Big-BWT~\cite{pfp}, we can make the construction of $A^d$ semi-external, reducing the memory usage from $\Oh{n}$ to $\Oh{|\mathsf{PFP}|}$, where $\mathsf{PFP}$ is a prefix free parsing of $T$.
By allowing the selection of arbitrary segments from $A^d$ (instead of partitioning $A^d$ and only allowing aligned segments) and by setting the considered $k$-mer length to $k := 1$,
we can reduce the time and space required for reference construction from $\Oh{n}$ to $\Oh{r^{1-\epsilon} n^\epsilon}$ and $\Oh{r}$, respectively, where $\epsilon \in [0,1]$ is a parameter (see Appendix~\ref{appendix:rlzsa_ref_constr}).
The new segment selection strategy also improves the reference quality, leading to better compression (as shown later in Table~\ref{tab:inputs}).

Finally, we speed up and reduce the memory usage for computing the RLZ parsing of $A^d$ for the computed reference $R$ by replacing the FM-index by Move-$r$ over $\overleftarrow{R}$ using an optimized rank/select data structure for large alphabets from Appendix~\ref{appendix:rank_select_large}.

We set the size of the RLZ reference $|R| := \min(5.2r, n/3)$, which gave us the best results overall in preliminary experiments.
RLZ phrases are limited to maximum length $h := 2^{16}$, which allows storing their length in 16-bit integers.

\section{Applications to the (Move)-$r$-index}
\label{sect:move-r-improvements}

The main bottleneck when answering locate queries using the $r$-index in practice are the applications of the function $\Phi$ required to enumerate occurrences.

There have been at least two different approaches to resolve this, both of which have been shown to be relevant in practice:
Puglisi and Zhukova store the RLZ-compressed differential suffix array next to the $r$-index, which allows for up to two orders of magnitude faster locate queries (with many occurrences) at the cost of using 2--13 times as much memory~\cite{DBLP:conf/dcc/PuglisiZ21}.
Bertram et al., on the other hand, implement the move data structure by \cite{DBLP:conf/icalp/NishimotoT21}, speeding up queries (both count and locate) by an order of magnitude while only doubling the required space~\cite{DBLP:conf/wea/Bertram0N24}.

We propose variations and combinations of the above and evaluate them in our experiments (Section~\ref{sect:experiments}).
Namely, we explore storing an LZ-End-compressed differential suffix array next to the $r$-index to find out whether we can obtain a trade-off similar to~\cite{DBLP:conf/dcc/PuglisiZ21}.
Furthermore, albeit much faster than in the original $r$-index, the $\Phi$ steps for enumerating occurrences remain a bottleneck for the locate queries of~\cite{DBLP:conf/wea/Bertram0N24}, each causing cache misses.
We thus also consider storing either compressed differential suffix array next to Move-$r$.
The space requirement for the whole index increases by at most a polylogarithmic factor, which follows from known bounds on LZ-End and differential suffix arrays~\cite{r-index-ext,DBLP:conf/soda/KempaS22}
\footnote{
    We again thank the anonymous reviewer for their notes regarding the additional required space.
}.

The $r$-index (as well as Move-$r$) already maintains a sampling $A'_r$ of the suffix array at the boundary of every BWT run, i.e., $|A'_r| = r$.
This creates redundancy regarding the sampling $A'$ of suffix array values at LZ phrase end positions proposed by \cite{DBLP:conf/dcc/PuglisiZ21} and in the proof of Theorem~\ref{thm:lzend-array}.
For reconstructing a suffix array value using $A_d$, we can as well use $A'_r$.
This worsens the worst-case access time to $\Oh{n}$, because there is no general bound on the length of a BWT run.
In practice, however, the average length of a BWT run is reasonably short even for repetitive inputs (see, e.g., column $\floor{n/r}$ in Table~\ref{tab:inputs}).

Alternatively, one could consider replacing the sampling $A'_r$ by $A'$ in the $r$-index or Move-$r$.
However, then, to retrieve the suffix array value at the end of a run, we must spend up to $\Oh{\lg\lg(n/\hat{z}) + h}$ extra time time for random access, which would worsen the performance of locate queries, conflicting with our motivations.
It would also increase the index size in practice as empirically, it holds that $\hat{z} > r$.
Therefore, we do not further consider this sampling method.

\section{Experiments}
\label{sect:experiments}

In our experiments, we evaluate the construction and locate query performance of the following variations of the $r$-index and Move-$r$:
\begin{itemize}
\item \texttt{r-index}      -- the original $r$-index of~\cite{DBLP:conf/soda/GagieNP18},
\item \texttt{r-rlz}        -- the $r$-index plus the RLZ-compressed differential suffix array,
\item \texttt{r-lzend}      -- the $r$-index plus the LZ-End-compressed differential suffix array,
\item \texttt{move-r}       -- the Move-$r$ index of~\cite{DBLP:conf/wea/Bertram0N24} (with improved internal rank/select),
\item \texttt{move-r-rlz}   -- Move-$r$ plus the RLZ-compressed differential suffix array, and
\item \texttt{move-r-lzend} -- Move-$r$ plus the LZ-End-compressed differential suffix array.
\end{itemize}
Note that only \texttt{move-r-rlz} contains the improved RLZSA construction that we described in Section~\ref{sect:rlzsa}, whereas \texttt{r-rlz} is based on a reimplementation of~\cite{DBLP:conf/dcc/PuglisiZ21} described in Appendix~\ref{appendix:rlzsa}.
We do this to better argue about our improvements.
However, we use Big-BWT~\cite{pfp} for all variants to compute suffix arrays.
As mentioned in the list above, we also applied improvements to Move-$r$ itself by engineering a new rank/select data structure tailored specifically for its internal queries.
We refer to Appendix~\ref{appendix:rank_select} for details.

We implemented all index variants in {C++20} and make the source code publicly available\footnote{Our source code: \url{https://github.com/LukasNalbach/Move-r}.}.
We compiled using the GCC 13.3.0 compiler with flags set for highest optimization (\texttt{-march=native -DNDEBUG -Ofast}).

Table~\ref{tab:inputs} lists the input texts that we considered in our experiments alongside relevant statistics.
The texts \texttt{einstein} and \texttt{english} are part of the Pizza\&Chili Corpus\footnote{Pizza\&Chili Corpus: \url{https://pizzachili.dcc.uchile.cl/}},
whereas \texttt{dewiki} is a highly repetitive text manually constructed from German Wikipedia entries.
From the \emph{National Center for Biotechnology Information}\footnote{NCBI: \url{https://www.ncbi.nlm.nih.gov/}} (NCBI) database, we constructed \texttt{chr19}, consisting of concatenated human chromosome 19 haplotypes, and \texttt{sars2}, a collection of Sars-Cov-2 genomes.
From all text files, we erased all zero bytes.

For each text, we generated two sets of query patterns (hence two lines per file in Table~\ref{tab:inputs}) using our tool \texttt{move-r-patterns} (also included in our source code repository).
The sets differ in the pattern length $m$, as well as the average number $\overline{occ}$ of occurrences in the respective text.
We chose the patterns in the first set such that $\overline{\occ{}} \approx m$.
This implies that when locating those patterns, we measure a blend of backward-search and suffix array extraction.
The performance of counting queries was measured against this set.
The patterns in the second set were chosen such that $\overline{\occ{}} \approx 10^5m$.
When locating these, we measure mostly suffix array extraction, which is a particularly relevant measure for our experiments.

We note that we do not consider $m \gg \occ{}$, i.e., where we have long patterns with few occurrences.
Measurements in this realm would essentially measure the performance of count queries, which has already been done for both the $r$-index as well as Move-$r$~\cite{DBLP:conf/wea/Bertram0N24}.

All experiments were done on a Ubuntu 24.04 system with two AMD EPYC 7452 CPUs (32/64x 2.35-3.35GHz, 2/16/128MB L1/2/3 cache) and 1TB of RAM (3200 MT/s DDR4).

\begin{table}[t]
\caption{
    The input files for our experiments.
    For each input, we give the size $n$, the size $|\Sigma|$ of the alphabet and the compression ratios $n / r$, $n / \hat{z}_R$ and $n / \hat{z}_\text{end}$ (higher values mean more repetitive).
    Here, $\hat{z}_R$ is the number of RLZ phrases of $A^d$ following the construction of~\cite{DBLP:conf/dcc/PuglisiZ21}, whereas $\hat{z}_{R'}$ refers to our improved construction from Section~\ref{sect:rlzsa}.
    As in Section~\ref{sect:lzend-s}, $\hat{z}_\text{end}$ denotes the number of LZ-End phrases of $A^d$.
    By $N$, we denote the number of queried patterns, by $m$ the pattern length and by $\overline{\occ{}}$ the average number of occurrences of the patterns.
    Per input, the first line indicates $N$, $m$ and $\overline{\occ{}}$ for $m \approx \overline{\occ{}}$, the second line for $m \ll \overline{\occ{}}$.
}
\begin{center}
\begin{small}
\begin{tabular}{l | r r | r r r r | r r r }
text                & $n$ [GB]    & $|\Sigma|$ & $\floor{n/r}$ & $\floor{n / \hat{z}_R}$ & $\floor{n / \hat{z}_{R'}}$ & $\floor{n / \hat{z}_\text{end}}$ & $N$ & $m$ & $\overline{\occ{}}$ \\
\hline
einstein            & \num{0.47}  & \num{140} & \num{1611} & \num{118} & \num{183} & \num{1081} & \num{100000} & \num{800} & \num{736} \\
                    & & & & & & &                                                         \num{10000} & \num{7} & \num{72644} \\
sars2               & \num{10.00} & \num{80} & \num{548} & \num{60} & \num{61} & \num{336} & \num{3000} & \num{2700} & \num{2745} \\
                    & & & & & & &                                                         \num{100} & \num{24} & \num{178948} \\
dewiki              & \num{10.00} & \num{207} & \num{377} & \num{122} & \num{146} & \num{306} & \num{100000} & \num{300} & \num{323} \\
                    & & & & & & &                                                         \num{1000} & \num{9} & \num{76372} \\
chr19               & \num{10.00} & \num{53} & \num{46} & \num{12} & \num{25} & \num{34} & \num{1000} & \num{25000} & \num{19531} \\
                    & & & & & & &                                                          \num{1000} &  \num{100} &\num{107991} \\
english             & \num{2.21}  & \num{240} & \num{3} & \num{2} & \num{4} & \num{3} & \num{500000} & \num{35} & \num{37} \\
                    & & & & & & &                                                         \num{300} & \num{7} & \num{91964} \\
\end{tabular}
\end{small}
\end{center}
\label{tab:inputs}
\end{table}

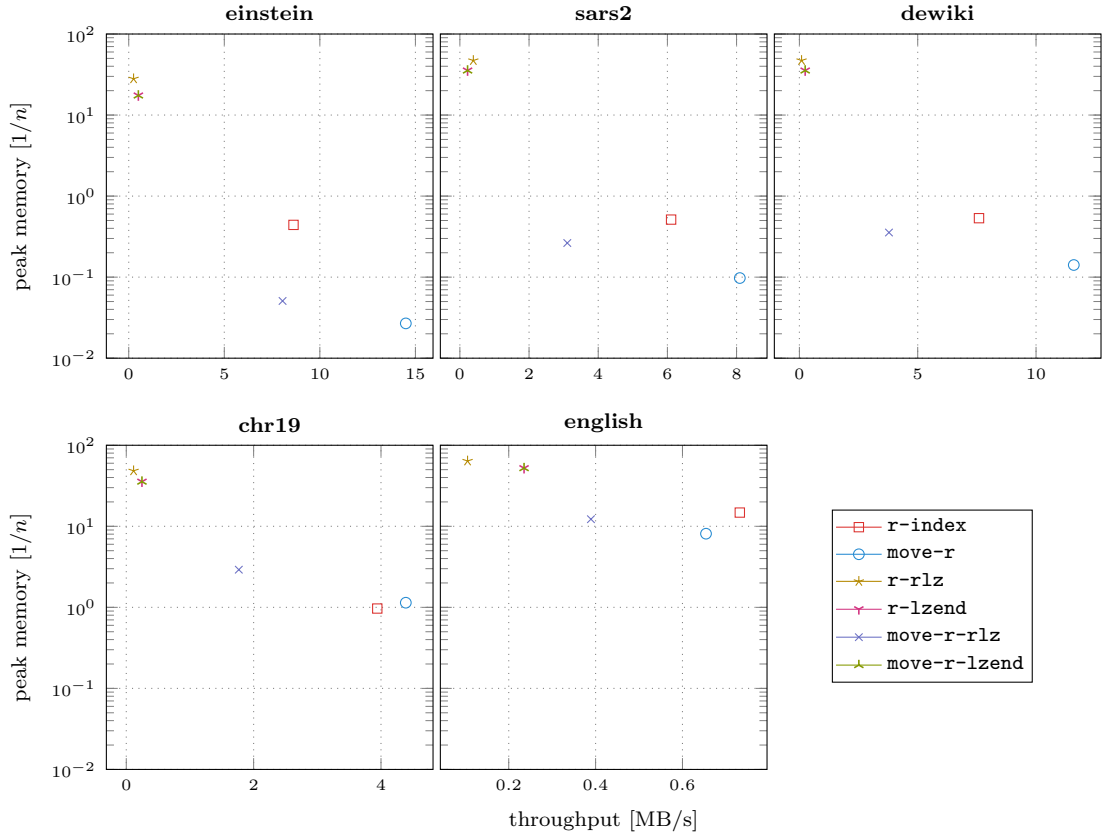
\begin{figure}[t]
\begin{center}
\begin{tikzpicture}

    \begin{groupplot} [
        width=0.4\textwidth,
        height=0.4\textwidth,
        group style={
            ylabels at=edge left,
            group size=3 by 2,
            vertical sep=3em,
            horizontal sep=0.25em,
        },
        title style={
            font=\scriptsize,
            yshift=-1ex,
        },
        legend columns=1,
        legend style={
            at={(1.2, 0.8)},
        }
    ]
    
    
    \nextgroupplot[
        title=\textbf{einstein},
        ylabel={peak memory [$1/n$]},
        ymode=log,
        ymin=0.01,
        ymax=100,
        ytick={0.01, 0.1, 1, 10, 100},
    ]
    \addplot coordinates { (8.62136,0.441342) };
    \addlegendentry{\texttt{r-index}};
    \addplot coordinates { (14.4979,0.0268693) };
    \addlegendentry{\texttt{move-r}};
    \addplot coordinates { (0.253978,28.0382) };
    \addlegendentry{\texttt{r-rlz}};
    \addplot coordinates { (0.504068,17.4333) };
    \addlegendentry{\texttt{r-lzend}};
    \addplot coordinates { (8.04933,0.0508317) };
    \addlegendentry{\texttt{move-r-rlz}};
    \addplot coordinates { (0.504068,17.4333) };
    \addlegendentry{\texttt{move-r-lzend}};
    \legend{}
    
    
    \nextgroupplot[
        title=\textbf{sars2},
        ymode=log,
        ymin=0.01,
        ymax=100,
        yticklabels={},
        ytick={0.01, 0.1, 1, 10, 100},
    ]
    \addplot coordinates { (6.10718,0.512905) };
    \addlegendentry{\texttt{r-index}};
    \addplot coordinates { (8.09363,0.0974567) };
    \addlegendentry{\texttt{move-r}};
    \addplot coordinates { (0.389304,47.15) };
    \addlegendentry{\texttt{r-rlz}};
    \addplot coordinates { (0.222694,35.4234) };
    \addlegendentry{\texttt{r-lzend}};
    \addplot coordinates { (3.10612,0.263709) };
    \addlegendentry{\texttt{move-r-rlz}};
    \addplot coordinates { (0.222694,35.4234) };
    \addlegendentry{\texttt{move-r-lzend}};
    \legend{}
        
    
    \nextgroupplot[
        title=\textbf{dewiki},
        ymode=log,
        ymin=0.01,
        ymax=100,
        yticklabels={},
        ytick={0.01, 0.1, 1, 10, 100},
    ]
    \addplot coordinates { (7.58965,0.533997) };
    \addlegendentry{\texttt{r-index}};
    \addplot coordinates { (11.5951,0.14117) };
    \addlegendentry{\texttt{move-r}};
    \addplot coordinates { (0.0921261,47.1819) };
    \addlegendentry{\texttt{r-rlz}};
    \addplot coordinates { (0.244896,35.4305) };
    \addlegendentry{\texttt{r-lzend}};
    \addplot coordinates { (3.78536,0.355324) };
    \addlegendentry{\texttt{move-r-rlz}};
    \addplot coordinates { (0.244896,35.4305) };
    \addlegendentry{\texttt{move-r-lzend}};
    \legend{}

    
    \nextgroupplot[
        title=\textbf{chr19},
        ylabel={peak memory [$1/n$]},
        ymode=log,
        ymin=0.01,
        ymax=100,
        ytick={0.01, 0.1, 1, 10, 100},
    ]
    \addplot coordinates { (3.94176,0.966607) };
    \addlegendentry{\texttt{r-index}};
    \addplot coordinates { (4.38977,1.1407) };
    \addlegendentry{\texttt{move-r}};
    \addplot coordinates { (0.114888,48.4334) };
    \addlegendentry{\texttt{r-rlz}};
    \addplot coordinates { (0.246281,35.5497) };
    \addlegendentry{\texttt{r-lzend}};
    \addplot coordinates { (1.76728,2.91534) };
    \addlegendentry{\texttt{move-r-rlz}};
    \addplot coordinates { (0.246281,35.5497) };
    \addlegendentry{\texttt{move-r-lzend}};
    \legend{}

    
    \nextgroupplot[
        title=\textbf{english},
        xlabel={throughput [MB/s]},
        ymode=log,
        ymin=0.01,
        ymax=100,
        yticklabels={},
        ytick={0.01, 0.1, 1, 10, 100},
    ]
    \addplot coordinates { (0.732028,14.772) };
    \addlegendentry{\texttt{r-index}};
    \addplot coordinates { (0.654212,8.10482) };
    \addlegendentry{\texttt{move-r}};
    \addplot coordinates { (0.105219,63.9915) };
    \addlegendentry{\texttt{r-rlz}};
    \addplot coordinates { (0.235273,52.1602) };
    \addlegendentry{\texttt{r-lzend}};
    \addplot coordinates { (0.389263,12.2308) };
    \addlegendentry{\texttt{move-r-rlz}};
    \addplot coordinates { (0.235273,52.1602) };
    \addlegendentry{\texttt{move-r-lzend}};

    \end{groupplot}
\end{tikzpicture}
\end{center}
\caption{
    Construction time versus peak memory usage (in bytes per input character) of our implemented index data structures for the given inputs.
	Memory usage is given on a logarithmic scale in order to highlight the marginal differences between \texttt{r-index}, \texttt{move-r} and \texttt{move-r-rlz}.
	Data points for \texttt{r-lzend} and \texttt{move-r-lzend} do, in fact, overlap nearly precisely.
}
\label{fig:results-construction}
\end{figure}

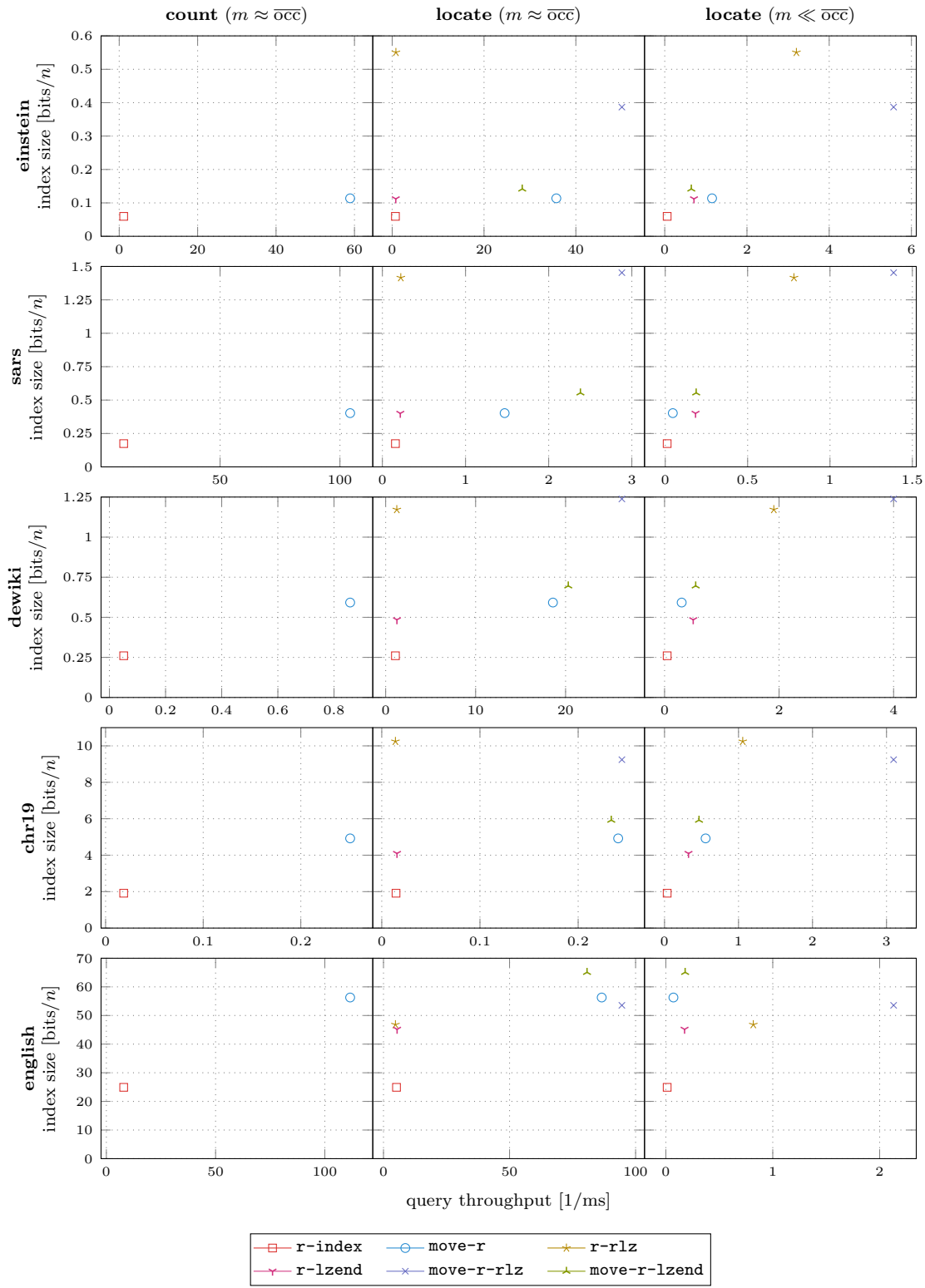
\begin{figure}[p]
\begin{center}
\begin{tikzpicture}

    \begin{groupplot} [
        width=0.4\textwidth,
        height=0.325\textwidth,
        group style={
            ylabels at=edge left,
            group size=3 by 5,
            horizontal sep=0em,
            vertical sep=1.25em,
        },
        title style={
            font=\scriptsize,
            yshift=-1ex,
        },
        legend style={
            at={(-0.45, -0.375)},
        }
    ]
    
    
    \nextgroupplot[
        title=\textbf{count} ($m \approx \overline{\occ{}}$),
        ylabel style={align=center},
        ylabel=\textbf{einstein}\\{index size [bits/$n$]},
        ymin=0,
        ymax=0.6,
        ytick distance=0.1,
    ]
    \addplot coordinates { (1.14348,0.0596715) };
    \addlegendentry{\texttt{r-index}};
    \addplot coordinates { (58.8946,0.113669) };
    \addlegendentry{\texttt{move-r}};
    \legend{}
    
    \nextgroupplot[
        title=\textbf{locate} ($m \approx \overline{\occ{}}$),
        yticklabels={},
        ymin=0,
        ymax=0.6,
        ytick distance=0.1,
    ]
    \addplot coordinates { (0.706795,0.0596715) };
    \addlegendentry{\texttt{r-index}};
    \addplot coordinates { (35.7452,0.113669) };
    \addlegendentry{\texttt{move-r}};
    \addplot coordinates { (0.809155,0.550622) };
    \addlegendentry{\texttt{r-rlz}};
    \addplot coordinates { (0.782948,0.11194) };
    \addlegendentry{\texttt{r-lzend}};
    \addplot coordinates { (50.0207,0.386802) };
    \addlegendentry{\texttt{move-r-rlz}};
    \addplot coordinates { (28.3184,0.141674) };
    \addlegendentry{\texttt{move-r-lzend}};
    \legend{}
    
    \nextgroupplot[
        title=\textbf{locate} ($m \ll \overline{\occ{}}$),
        yticklabels={},
        ymin=0,
        ymax=0.6,
        ytick distance=0.1,
    ]
    \addplot coordinates { (0.0590422,0.0596715) };
    \addlegendentry{\texttt{r-index}};
    \addplot coordinates { (1.15052,0.113669) };
    \addlegendentry{\texttt{move-r}};
    \addplot coordinates { (3.20213,0.550622) };
    \addlegendentry{\texttt{r-rlz}};
    \addplot coordinates { (0.711244,0.11194) };
    \addlegendentry{\texttt{r-lzend}};
    \addplot coordinates { (5.56663,0.386802) };
    \addlegendentry{\texttt{move-r-rlz}};
    \addplot coordinates { (0.646407,0.141674) };
    \addlegendentry{\texttt{move-r-lzend}};
    \legend{}
    
    
    \nextgroupplot[
        ylabel style={align=center},
        ylabel=\textbf{sars}\\{index size [bits/$n$]},
        ymin=0,
        ymax=1.5,
        ytick distance=0.25,
    ]
    \addplot coordinates { (9.78815,0.17418) };
    \addlegendentry{\texttt{r-index}};
    \addplot coordinates { (104.252,0.402362) };
    \addlegendentry{\texttt{move-r}};
    \legend{}
    
    \nextgroupplot[
        yticklabels={},
        ymin=0,
        ymax=1.5,
        ytick distance=0.25,
    ]
    \addplot coordinates { (0.156849,0.17418) };
    \addlegendentry{\texttt{r-index}};
    \addplot coordinates { (1.47018,0.402362) };
    \addlegendentry{\texttt{move-r}};
    \addplot coordinates { (0.221417,1.41642) };
    \addlegendentry{\texttt{r-rlz}};
    \addplot coordinates { (0.215509,0.400335) };
    \addlegendentry{\texttt{r-lzend}};
    \addplot coordinates { (2.88017,1.45354) };
    \addlegendentry{\texttt{move-r-rlz}};
    \addplot coordinates { (2.38155,0.553979) };
    \addlegendentry{\texttt{move-r-lzend}};
    \legend{}
    
    \nextgroupplot[
        yticklabels={},
        ymin=0,
        ymax=1.5,
        ytick distance=0.25,
    ]
    \addplot coordinates { (0.0116072,0.17418) };
    \addlegendentry{\texttt{r-index}};
    \addplot coordinates { (0.0458268,0.402362) };
    \addlegendentry{\texttt{move-r}};
    \addplot coordinates { (0.781238,1.41642) };
    \addlegendentry{\texttt{r-rlz}};
    \addplot coordinates { (0.183926,0.400335) };
    \addlegendentry{\texttt{r-lzend}};
    \addplot coordinates { (1.38618,1.45354) };
    \addlegendentry{\texttt{move-r-rlz}};
    \addplot coordinates { (0.187668,0.553979) };
    \addlegendentry{\texttt{move-r-lzend}};
    \legend{}
    
    
    \nextgroupplot[
        ylabel style={align=center},
        ylabel=\textbf{dewiki}\\{index size [bits/$n$]},
        ymin=0,
        ymax=1.25,
        ytick distance=0.25,
    ]
    \addplot coordinates { (0.0509048,0.26024) };
    \addlegendentry{\texttt{r-index}};
    \addplot coordinates { (0.85744,0.592061) };
    \addlegendentry{\texttt{move-r}};
    \legend{}
    
    \nextgroupplot[
        yticklabels={},
        ymin=0,
        ymax=1.25,
        ytick distance=0.25,
    ]
    \addplot coordinates { (1.08492,0.26024) };
    \addlegendentry{\texttt{r-index}};
    \addplot coordinates { (18.5936,0.592061) };
    \addlegendentry{\texttt{move-r}};
    \addplot coordinates { (1.23729,1.1723) };
    \addlegendentry{\texttt{r-rlz}};
    \addplot coordinates { (1.25095,0.484339) };
    \addlegendentry{\texttt{r-lzend}};
    \addplot coordinates { (26.2593,1.23814) };
    \addlegendentry{\texttt{move-r-rlz}};
    \addplot coordinates { (20.2994,0.696148) };
    \addlegendentry{\texttt{move-r-lzend}};
    \legend{}
    
    \nextgroupplot[
        yticklabels={},
        ymin=0,
        ymax=1.25,
        ytick distance=0.25,
    ]
    \addplot coordinates { (0.0453978,0.26024) };
    \addlegendentry{\texttt{r-index}};
    \addplot coordinates { (0.298952,0.592061) };
    \addlegendentry{\texttt{move-r}};
    \addplot coordinates { (1.9083,1.1723) };
    \addlegendentry{\texttt{r-rlz}};
    \addplot coordinates { (0.500247,0.484339) };
    \addlegendentry{\texttt{r-lzend}};
    \addplot coordinates { (4.00048,1.23814) };
    \addlegendentry{\texttt{move-r-rlz}};
    \addplot coordinates { (0.543672,0.696148) };
    \addlegendentry{\texttt{move-r-lzend}};
    \legend{}

    
    \nextgroupplot[
        ylabel style={align=center},
        ylabel=\textbf{chr19}\\{index size [bits/$n$]},
        ymin=0,
        ymax=11,
        ytick distance=2,
    ]
    \addplot coordinates { (0.0185479,1.91293) };
    \addlegendentry{\texttt{r-index}};
    \addplot coordinates { (0.250535,4.92133) };
    \addlegendentry{\texttt{move-r}};
    \legend{}
    
    \nextgroupplot[
        yticklabels={},
        ymin=0,
        ymax=11,
        ytick distance=2,
    ]
    \addplot coordinates { (0.0145336,1.91293) };
    \addlegendentry{\texttt{r-index}};
    \addplot coordinates { (0.240741,4.92133) };
    \addlegendentry{\texttt{move-r}};
    \addplot coordinates { (0.0139259,10.2501) };
    \addlegendentry{\texttt{r-rlz}};
    \addplot coordinates { (0.0155166,4.09237) };
    \addlegendentry{\texttt{r-lzend}};
    \addplot coordinates { (0.244519,9.24179) };
    \addlegendentry{\texttt{move-r-rlz}};
    \addplot coordinates { (0.233718,5.9234) };
    \addlegendentry{\texttt{move-r-lzend}};
    \legend{}
    
    \nextgroupplot[
        yticklabels={},
        ymin=0,
        ymax=11,
        ytick distance=2,
    ]
    \addplot coordinates { (0.0357893,1.91293) };
    \addlegendentry{\texttt{r-index}};
    \addplot coordinates { (0.55541,4.92133) };
    \addlegendentry{\texttt{move-r}};
    \addplot coordinates { (1.05697,10.2501) };
    \addlegendentry{\texttt{r-rlz}};
    \addplot coordinates { (0.324485,4.09237) };
    \addlegendentry{\texttt{r-lzend}};
    \addplot coordinates { (3.09591,9.24179) };
    \addlegendentry{\texttt{move-r-rlz}};
    \addplot coordinates { (0.466078,5.9234) };
    \addlegendentry{\texttt{move-r-lzend}};
    \legend{}
    
    
    \nextgroupplot[
        ylabel style={align=center},
        ylabel=\textbf{english}\\{index size [bits/$n$]},
        ymin=0,
        ymax=70,
        ytick distance=10,
    ]
    \addplot coordinates { (7.93718,24.9084) };
    \addlegendentry{\texttt{r-index}};
    \addplot coordinates { (111.59,56.2821) };
    \addlegendentry{\texttt{move-r}};
    \legend{}
    
    \nextgroupplot[
        xlabel={query throughput [1/ms]},
        yticklabels={},
        ymin=0,
        ymax=70,
        ytick distance=10,
    ]
    \addplot coordinates { (5.16089,24.9084) };
    \addlegendentry{\texttt{r-index}};
    \addplot coordinates { (86.5565,56.2821) };
    \addlegendentry{\texttt{move-r}};
    \addplot coordinates { (4.73906,46.8262) };
    \addlegendentry{\texttt{r-rlz}};
    \addplot coordinates { (5.39799,45.1645) };
    \addlegendentry{\texttt{r-lzend}};
    \addplot coordinates { (94.5555,53.5499) };
    \addlegendentry{\texttt{move-r-rlz}};
    \addplot coordinates { (80.7361,65.0831) };
    \addlegendentry{\texttt{move-r-lzend}};
    
    \nextgroupplot[
        yticklabels={},
        ymin=0,
        ymax=70,
        ytick distance=10,
    ]
    \addplot coordinates { (0.0121844,24.9084) };
    \addlegendentry{\texttt{r-index}};
    \addplot coordinates { (0.0707793,56.2821) };
    \addlegendentry{\texttt{move-r}};
    \addplot coordinates { (0.818028,46.8262) };
    \addlegendentry{\texttt{r-rlz}};
    \addplot coordinates { (0.174984,45.1645) };
    \addlegendentry{\texttt{r-lzend}};
    \addplot coordinates { (2.13067,53.5499) };
    \addlegendentry{\texttt{move-r-rlz}};
    \addplot coordinates { (0.180324,65.0831) };
    \addlegendentry{\texttt{move-r-lzend}};
    \legend{}
    
    \end{groupplot}
\end{tikzpicture}
\end{center}
\caption{
    Size (in bits per input character) versus locate query throughput (queries per millisecond) of our implemented index data structures given medium ($m \approx \overline{\occ{}}$) or short ($m \ll \overline{\occ{}}$)patterns for the given inputs.
	For reference, we also give the count query throughput of the base index data structures, \texttt{r-index} and \texttt{move-r} for medium patterns.
}
\label{fig:results-locate}
\end{figure}

\subsection{Construction Performance}

We first look at the construction of the competing indexes.
Figure~\ref{fig:results-construction} shows the construction throughput as well as the peak memory usage during construction.

To no surprise, compressing the differential suffix array dominates the time and space needed for construction (comparing \texttt{r-index} and \texttt{move-r} to the variants storing a compressed suffix array).
Regarding the two different compression schemes, we see that LZ-End (\texttt{move-r-lzend} and \texttt{r-lzend}) is relatively slow to compute overall, but competitive with \texttt{r-rlz} regarding both time and space.

Our improved RLZSA construction from Section~\ref{sect:rlzsa}~(\texttt{move-r-rlz}), however, clearly outperforms the other variants that compress the suffix array:
it is faster by a factor of up to ten (einstein) and the required space is sometimes even lower than for just computing the $r$-index.
It also clearly outperforms the construction of our reimplementation of RLZSA~(\texttt{r-rlz}).

\subsection{Locate Query Performance}

We now look at locate queries for the two pattern sets described above (one query per pattern).
Figure~\ref{fig:results-locate} shows the query throughput as well as the size of the considered indexes.
For reference, we also give the throughput of count queries, which does not involve any compressed suffix arrays (because we only report the size of the corresponding suffix array interval, not its contents).

We can assert that the performance of \texttt{move-r} is somewhat improved over~\cite{DBLP:conf/wea/Bertram0N24} (the experiments there were done on the same machine). 
The trade-off compared to \texttt{r-index} remains the same: we require roughly twice the amount of space, but queries are considerably faster overall.

As expected, enhancing the $r$-index by compressed suffix arrays (\texttt{r-rlz} and \texttt{r-lzend}) considerably improves the performance of locate queries for patterns with many occurrences.
This confirms the results of~\cite{DBLP:conf/dcc/PuglisiZ21}.
We see how \texttt{r-rlz} achieves overall higher throughputs than \texttt{r-lzend} (by a factor of 4 for $m \ll \overline{\occ{}}$).
This is expected, as random access on RLZ-compressed data incurs only one cache miss per phrase, as opposed to up to $h$ cache misses for LZ-End.
However, we see that LZ-End achieves better compression, which is also confirmed in Table~\ref{tab:inputs} when comparing columns $\floor{n/\hat{z}_\text{R}}$ and $\floor{n/\hat{z}_\text{end}}$, making it a trade-off.

When enhancing Move-$r$ with compressed suffix arrays (\texttt{move-r-rlz} and \texttt{move-r-lzend}), the picture differs a bit.
Here, using LZ-End (\texttt{move-r-lzend}) can sometimes even slow down locate queries (e.g., on einstein and chr19).
Using RLZ (\texttt{rlzsa}), on the other hand, improves query performance by a great deal particularly for frequent patterns $m \ll \overline{\occ{}}$ (e.g., by a factor of over 16 for sars).
Again, however, LZ-End yields much better compression than RLZ in most cases (now comparing $\floor{n/\hat{z}_\text{R'}}$ and $\floor{n/\hat{z}_\text{end}}$ in Table~\ref{tab:inputs}).
Interestingly however, on english, the improved RLZSA construction (\texttt{move-r-rlz}) achieves better compression than LZ-End (\texttt{move-r-lzend}), which is a topic for further research.

Overall, our improved RLZSA (\texttt{move-r-rlz}) achieves better compression than that of~\cite{DBLP:conf/dcc/PuglisiZ21} (\texttt{r-rlz}).
This is particularly evident for einstein and chr19, where \texttt{move-r-rlz} is smaller than \texttt{r-rlz} despite storing more information (e.g., compare \texttt{move-r} against \texttt{r-index}).

\section{Conclusions and Future Work}
\label{sect:final}

We enhanced the recent $r$-index as well as Move-$r$ by compressed suffix arrays with efficient random access to speed up locate queries.
For this, we explored two different compression schemes: Relative Lempel-Ziv and LZ-End.
The experiments show that the idea works, confirming and expanding upon the results of~\cite{DBLP:conf/dcc/PuglisiZ21}.
We can achieve different trade-offs regarding construction performance, index size and query performance by choosing different combinations of index and compressed suffix arrays.
For both compression schemes, we gave new strategies and algorithms that improve upon their predecessors.

In future research, enhancing the subsampled $r$-index by Cobas et al.~\cite{DBLP:conf/cpm/CobasGN21} may be considered.
We also saw that reference construction for Relative Lempel-Ziv is still an interesting topic of research beyond~\cite{DBLP:conf/spire/GagiePV16,DBLP:conf/dcc/PuglisiZ21}.
By improving upon the segment selection strategy of~\cite{DBLP:conf/dcc/PuglisiZ21}, we were able to improve the quality of the reference and thus compression.

\clearpage
\bibliographystyle{plainurl}
\bibliography{literature}

\clearpage
\appendix
\section{Computing the LZ-End Parsing}
\label{appendix:lzend-s-ext}

We look at computing the LZ-End parsing for a given text $T \in \Sigma^*$ in practice.
The algorithm given originally by Kreft and Navarro~\cite{DBLP:conf/dcc/KreftN10} runs in time $\Oh{nh(\lg|\Sigma| + \lg\lg n)}$ using FM-index machinery.
Kempa and Kosolobov~\cite{DBLP:conf/esa/KempaK17} greatly improve this and state an algorithm that runs in time $\Oh{n}$, but we conjecture it to be hardly practical.

We focus on their surprisingly simple and practically competitive (albeit suboptimal) $\Oh{n \lg\lg n}$-time algorithm, which they implemented in the \emph{lz-end-toolkit}\footnote{lz-end-toolkit: \url{https://github.com/dominikkempa/lz-end-toolkit}} accompanying~\cite{DBLP:conf/dcc/KempaK17}.
However, we find that the description of this algorithm comes either too short (in~\cite{DBLP:conf/esa/KempaK17}) or gets somewhat lost in the details of surrounding work (in~\cite{DBLP:conf/dcc/KempaK17}).
Therefore, we choose to give a concise but comprehensible description here and apply a few further practical improvements.
The following observation is the main tool for their algorithm.
\begin{lemma}
\label{lemma:kk17}
If $f_1 \cdots f_z$ is the LZ-End parsing of a string $T \in \Sigma^*$, then, for any character $\alpha \in \Sigma$,
the last phrase in the LZ-End parsing of $T\alpha$ is (1)~$f_{z-1} f_z \alpha$ or (2)~$f_z \alpha$ or (3) $\alpha$.
\end{lemma}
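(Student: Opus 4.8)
The plan is to track how the greedy left-to-right LZ-End parsing reacts to appending one character, exploiting that each phrase is determined by data lying to its left. Write $e_i := |f_1 \cdots f_i|$ for the phrase ends, so $e_z = n$, and recall the operational form of the greedy rule: the phrase starting at position $e_{i-1}+1$ is $T[e_{i-1}+1 ~..~ e_{i-1}+c+1]$, where $c$ is the largest integer with $c \le (\text{length of the remaining text}) - 1$ (the subtracted $1$ leaving room for the appended character) such that $T[e_{i-1}+1 ~..~ e_{i-1}+c]$ is a suffix of $T[1 ~..~ e_j]$ for some already-placed phrase $j$ (with $f_0 := \epsilon$, $e_0 := 0$ admissible).

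First I would compare the greedy runs on $T$ and on $T\alpha$ position by position. Suppose they have produced identical phrases through phrase $i-1$; the next phrase then starts at $e_{i-1}+1$ in both, with the same admissible sources $e_0, \dots, e_{i-1}$, and for every candidate copied-prefix length $c \le n - e_{i-1}$ the string $T[e_{i-1}+1 ~..~ e_{i-1}+c]$ is common to $T$ and $T\alpha$. The only discrepancy is the room-for-$\alpha$ cap: $c \le n - e_{i-1} - 1$ for $T$ versus $c \le n - e_{i-1}$ for $T\alpha$. Hence the two runs can diverge only in the single case where $T[e_{i-1}+1 ~..~ n]$ -- the entire remaining text of $T$ -- is a suffix of some $T[1 ~..~ e_j]$ with $j < i$; and in that case the $T\alpha$-phrase copies all of $T[e_{i-1}+1 ~..~ n]$ and appends $\alpha$, so it equals $T[e_{i-1}+1 ~..~ n]\,\alpha$ and is the last phrase of $\mathrm{LZEnd}(T\alpha)$. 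So $\mathrm{LZEnd}(T\alpha)$ agrees with $\mathrm{LZEnd}(T)$ up to the first index $i$ where this ``break-away'' condition holds, if any.

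It therefore suffices to show the break-away condition fails for every $i \le z-2$, and then to read off the cases. If break-away first occurs at $i = z-1$, the resulting phrase is $T[e_{z-2}+1 ~..~ n]\,\alpha = f_{z-1} f_z \alpha$, which is the last phrase -- case~(1). If instead the run reproduces $f_1, \dots, f_{z-1}$ in full, the phrase after $f_{z-1}$ is computed from the remaining text $f_z\alpha$ with sources $e_0, \dots, e_{z-1}$; by the same dichotomy it equals $f_z$ unless $f_z = T[e_{z-1}+1 ~..~ n]$ is a suffix of some $T[1 ~..~ e_j]$ with $j \le z-1$, in which case it equals $f_z\alpha$ -- case~(2). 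If it equals $f_z$, only the single character $\alpha$ remains, and since every phrase appends exactly one character to a (possibly empty) copied factor, it must form the final phrase $\alpha$ -- case~(3). Ruling out break-away for $i \le z-2$ is exactly the claim that $T[e_{i-1}+1 ~..~ n] = f_i \cdots f_z$ is not a suffix of any $T[1 ~..~ e_j]$ with $j < i$; I would prove it by induction on the parsing, using that $f_1$ is a single character and that each later phrase's copied part has length at most the preceding phrase end $e_{j-1}$, so that an early break-away would make the tail $f_i \cdots f_z$ too long to occur as a suffix of $T[1 ~..~ e_{i-1}]$.

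The main obstacle is this last claim -- that no break-away happens before the penultimate phrase, which is precisely what confines the perturbation to the last two phrases; the rest is the short dichotomy of the comparison step plus bookkeeping. A handy auxiliary fact for the induction is that LZ-End is prefix-closed at its own phrase ends (the parsing of $T[1 ~..~ e_i]$ is exactly $f_1 \cdots f_i$, since truncating the text only tightens the room-for-$\alpha$ cap, and that cap never binds below $e_i$): this lets one treat $f_i \cdots f_z$ as the greedy continuation past $T[1 ~..~ e_{i-1}]$ and derive the needed length contradiction from the geometry of how phrase ends can grow.
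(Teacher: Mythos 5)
The paper itself does not prove this lemma; it is quoted as an observation of Kempa and Kosolobov, so there is no in-paper proof to compare against. Judged on its own, your first two steps are sound and match the standard argument: the two greedy runs on $T$ and $T\alpha$ coincide phrase by phrase until the first index $i$ at which the entire remaining text $T[e_{i-1}+1 \,..\, n]$ is a suffix of some $T[1\,..\,e_j]$ with $j<i$ (the ``break-away''), and once break-away is confined to $i\in\{z-1,z\}$ or never occurs, the three cases fall out correctly. The prefix-closure observation is also correct.

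The genuine gap is exactly where you flag it, and the mechanism you propose to close it does not work. You suggest ruling out break-away at $i\le z-2$ by a \emph{length} argument: that $f_i\cdots f_z$ would be ``too long to occur as a suffix of $T[1\,..\,e_{i-1}]$.'' But $|f_i\cdots f_z| = n-e_{i-1}$, and for $i=z-2$ this is just $|f_{z-2}|+|f_{z-1}|+|f_z|$, typically far smaller than $e_{z-3}$ (think of a text whose parsing has many phrases of comparable length); there is no length obstruction, and the doubling bound $|f_k|\le e_{k-1}+1$ points in the wrong direction. The actual obstruction is the greedy \emph{maximality} of earlier copied parts. Concretely: one may assume $i=z-2$ (a suffix of a witness is a witness), so suppose $f_{z-2}f_{z-1}f_z$ is a suffix of $T[1\,..\,e_j]$ with $j\le z-3$. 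If $|f_j|\le|f_{z-1}|+|f_z|$, the boundary $e_{j-1}$ lands inside this occurrence no later than the end of its $f_{z-2}$-part, so a prefix of $f_{z-2}f_{z-1}f_z$ of length at least $|f_{z-2}|$ (and at most $|f_{z-2}f_{z-1}f_z|-1$) is a suffix of $T[1\,..\,e_{j-1}]$, contradicting the greedy choice of the copied part of $f_{z-2}$. Otherwise $|f_j|>|f_{z-1}|+|f_z|$, so the occurrence of $f_{z-1}f_z$ lies strictly inside phrase $f_j$ and ends where $f_j$ ends; hence $f_{z-1}f_z$ minus its last character is a suffix of $f_j$ minus its last character, which by definition of $f_j$ is a suffix of $T[1\,..\,e_m]$ for $f_j$'s source $m<j\le z-3$, contradicting the greedy choice of the copied part of $f_{z-1}$. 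The subtlety your sketch misses is that a valid copied part must end at a \emph{phrase-aligned} position, so one cannot simply truncate the witness occurrence by one character; one must locate an actual phrase boundary (either $e_{j-1}$ or, via $f_j$'s source, $e_m$) inside or behind it. Without an argument of this kind, the claim that confines the perturbation to the last two phrases is unproven, and it is the entire content of the lemma.
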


This allows us to compute the LZ-End parsing in a left-to-right scan of $T$ where in each step,
we only have to consider to either
(1)~\emph{merge} the two most recent phrases,
(2)~\emph{extend} the most recent phrase or
(3)~\emph{begin} a new phrase consisting of a single character.

Suppose that we already parsed the prefix $T[1 ~..~ i-1] = f_1 \cdots f_z$ for some position $i > 1$.
In the next step, we compute the LZ-End parsing of $T[1 \dots i]$, i.e., we append $T[i]$.
We look for a phrase $f_p$ such that a suffix $X$ of $T[1 ~..~ i-1]$ of maximum possible length is also a suffix of $T[1 ~..~ |f_1 \cdots f_p|]$.
Then, we greedily decide which of the three aforementioned cases applies.
If $|X| \geq |f_{z-1}|+|f_z|$, it means that the new phrase covers at least the two most recent phrases and we can merge them to $(p, |f_{z-1}|+|f_z|+1, T[i])$.
Otherwise, if $|X| \geq |f_z|$, we can extend the most recent phrase to $(p, |f_z|+1, T[i])$.
If neither applies, we begin a new phrase $(0,1,T[i])$.
The core of the problem is clearly finding $f_p$ and the corresponding suffix $X$.
We employ the following data structures over the \emph{reverse} input $\overleftarrow{T}$:
\begin{enumerate}
\item the inverse suffix array $A^{-1}$, which we can compute in linear time and space using \cite{DBLP:journals/tc/NongZC11} and subsequent inversion,
\item the LCP array $H$ that can be computed in linear time and space~\cite{DBLP:conf/cpm/KasaiLAAP01}, where $H[i] = \text{lce}(T[A[i-1] ~..~ n], T[A[i] ~..~ n)$ for $i > 1$ is the length of the longest common extension (lce) between two lexicographically neighbouring suffixes,
\item a data structure that allows for constant-time \emph{range minimum queries} (rmq) over $H$, which we can compute in linear time and space~\cite{DBLP:journals/siamcomp/FischerH11} and
\item an associative dynamic predecessor/successor data structure $M$ that is initially empty.
\end{enumerate}

In $M$, we mark the end positions of already computed LZ-End phrases in lex-space, i.e., a position $i \geq 1$ is marked iff there is a phrase $f_p$ such that $A[i] = |f_1 \cdots f_p|$.
We then use $M$ to lookup the phrase number $p$ using position $i$ as the query key.
We note that $A$ is used only conceptually and is only required for the construction of $A^{-1}$ and $H$; it may be discarded afterwards to save space.

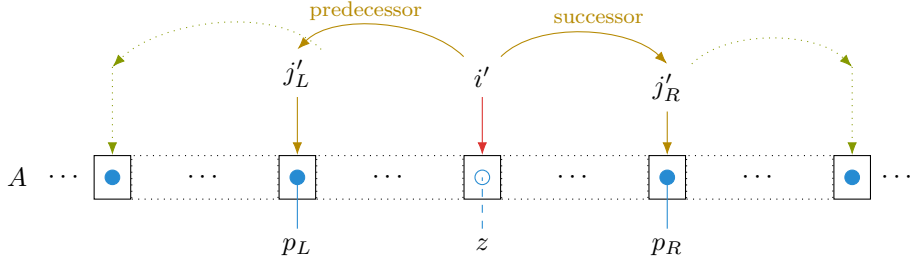
\begin{figure}[t]
\begin{center}
\begin{tikzpicture}[
    every node/.style={
        font=\small,
    },
    box/.style={
        draw,
        shape=rectangle,
        minimum height=1.5em,
        minimum width=1.25em
    },
    dots/.style={
        box,
        dotted,
        minimum width=5em,
    },
    virt/.style={
        minimum height=1.5em,
    }
]
\node(llldots){$\cdots$};
\node[left=0 of llldots]{$A$};
\node[box,right=0 of llldots](pl2){~};
\node[dots,right=0 of pl2](lldots){$\cdots$};
\node[box,right=0 of lldots](pl1){~};
\node[below=1em of pl1](pl1-label){$p_L$};
\node[dots,right=0 of pl1](ldots){$\cdots$};
\node[box,right=0 of ldots](z){~};
\node[below=1em of z](z-label){$z$};
\node[dots,right=0 of z](rdots){$\cdots$};
\node[box,right=0 of rdots](pr1){~};
\node[below=1em of pr1](pr1-label){$p_R$};
\node[dots,right=0 of pr1](rrdots){$\cdots$};
\node[box,right=0 of rrdots](pr2){~};
\node[right=0 of pr2]{$\cdots$};

\draw[solarized-blue,fill=solarized-blue] (pl2) circle (.25em);
\draw[solarized-blue,fill=solarized-blue] (pl1) circle (.25em);
\draw[solarized-blue] (pl1-label) -- (pl1.center);
\draw[solarized-blue] (z) circle (.25em);
\draw[solarized-blue,dashed] (z-label) -- (z.center);
\draw[solarized-blue,fill=solarized-blue] (pr1) circle (.25em);
\draw[solarized-blue] (pr1-label) -- (pr1.center);
\draw[solarized-blue,fill=solarized-blue] (pr2) circle (.25em);

\node[above=2em of z](i){$i'$};
\draw[solarized-red,-Latex] (i) -- (z);

\node[above=2em of pl1](jl1){$j'_L$};
\draw[solarized-yellow,-Latex] (jl1) -- (pl1);
\draw[solarized-yellow,-Latex] (i.north west) to[out=135,in=45,looseness=0.8] node[yshift=0.5em,xshift=-0.5em]{\scriptsize{predecessor}} (jl1.north);

\node[virt,above=1.5em of pl2](jl2){~};
\draw[solarized-green,dotted,-Latex] (jl1.north east) to[out=135,in=45,looseness=0.8] (jl2.north);
\draw[solarized-green,dotted,-Latex] (jl2.north) -- (pl2);

\node[above=1.5em of pr1](jr1){$j'_R$};
\draw[solarized-yellow,-Latex] (jr1) -- (pr1);
\draw[solarized-yellow,-Latex] (i.north east) to[out=45,in=135,looseness=0.8] node[yshift=0.5em,xshift=0.5em]{\scriptsize{successor}} (jr1.north);

\node[virt,above=1.5em of pr2](jr2){~};
\draw[solarized-green,dotted,-Latex] (jr1.north east) to[out=45,in=135,looseness=0.8] (jr2.north);
\draw[solarized-green,dotted,-Latex] (jr2.north) -- (pr2);
\end{tikzpicture}
\end{center}
\caption{
    Search for source phrase candidates in lex-space.
    Positions that mark the ending locations in $M$ of already computed LZ-End phrases are indicated by the circles.
    We search for a predecessor starting from $i'-1$ and a successor starting from $i'+1$, giving us the locations $j'_L$ and $j'_R$ that mark the candidates $p_L$ and $p_R$, respectively.
    In case $p_L = z-1$ or $p_R = z-1$, we do another predecessor or successor query starting from $j'_L-1$ or $j'_R+1$, respectively, to find a candidate for merging.
}
\label{fig:lzend-candidates}
\end{figure}

In the situation described earlier, we already parsed $T[1 ~..~ i-1] = f_1 \cdots f_z$ and want to find the longest possible suffix $X$ that is a suffix of $T[1 ~..~ |f_1 \cdots f_p|]$ for some phrase $f_p$.
In $M$, we search for the predecessor $p_L$ and the successor $p_R$, respectively, starting from position $i' = A^{-1}[n-i]$.
The phrases $f_{p_L}$ and $f_{p_R}$ are candidates for our sought phrase $f_p$ because the strings $\overleftarrow{T[1 ~..~ |f_1 \cdots f_{p_L}|]}$ and $\overleftarrow{T[1 ~..~ |f_1 \cdots f_{p_R}|]}$ are lexicographically closest to $\overleftarrow{T[1 ~..~ i-1]}$.
We greedily select $p := p_L$ or $p := p_R$ by whichever shares a longer common prefix, which we can compute via a range minimum query over $H$, respectively.

Some care has to be taken regarding the selection of $p$: we need $p \neq z$ (we cannot copy from the last phrase because we extend it) and for merging, we also need $p \neq z-1$ (we cannot copy from a phrase that we merge away).
We ensure $p \neq z$ (which is marked at position $i'$ in $M$ because it is $|f_1 \cdots f_z| = i$) by offsetting predecessor searches to start from position $i'-1$ and successor searches to start from position $i'+1$.
To ensure $p \neq z-1$ for merges, we have to check whether $p_L = z-1$ or $p_R = z-1$ and if either is the case, compute the next predecessor or successor, respectively, conceptually skipping phrase $f_{z-1}$ in $M$.

The candidate search is visualized in Figure~\ref{fig:lzend-candidates}.
To get overall time $\Oh{n \lg\lg n}$ to compute the LZ-End parsing, we can implement $M$ using a y-fast trie~\cite{DBLP:journals/ipl/Willard83} such that each update and query can be done in time $\Oh{\lg\lg n}$.
It is easy to see that the required data structures require space $\Oh{n}$.
For further reference, we give the pseudocode for the parsing algorithm in Algorithm~\ref{algo:lzend}, which is deliberately close to the actual C++ implementation.

\begin{table}[t]
\begin{center}
\begin{tabular}{r | r r r | r r}
Input & $n$ & $z$ & $z/n$ & \emph{lz-end-toolkit} & Algorithm~\ref{algo:lzend}\\
\hline
cere            & \num{461286644}  & \num{1863246}  & \num{0.40}\%  & \underline{\num{455}} & \num{582} \\
dblp.xml        & \num{296135874}  & \num{10244979} & \num{3.46}\%  & \num{283} & \underline{\num{251}} \\
dna             & \num{403927746}  & \num{26939573} & \num{6.67}\%  & \num{602} & \underline{\num{379}} \\
einstein.en.txt & \num{467626544}  & \num{104087}   & \num{0.02}\%  & \underline{\num{454}} & \num{728} \\
english.1024MB  & \num{1073741824} & \num{68034586} & \num{6.34}\%  & \num{2119} & \underline{\num{1160}} \\
pitches         & \num{55832855}   & \num{5675142}  & \num{10.16}\% & \num{27} & \underline{\num{25}} \\
proteins        & \num{1184051855} & \num{77369007} & \num{6.53}\%  & \num{1651} & \underline{1185} \\
sources         & \num{210866607}  & \num{12750341} & \num{6.05}\%  & \num{158} & \underline{135} \\
\end{tabular}
\end{center}
\caption{
    Benchmark results showing the parsing times, in seconds, of \emph{lz-end-toolkit} and Algorithm~\ref{algo:lzend} (shortest underlined).
    For each input, we also list the length $n$, the number $z$ of LZ-End phrases and the ratio $z/n$ as a simple compressibility measure.
}
\label{tab:lzend-bench}
\end{table}

\subsection{Experiments}

In a small experiment on the same setup as that described in Section~\ref{sect:experiments}, we compress different files from the Pizza\&Chili corpus and compare the performance of our implementation
\footnote{
    We use \emph{libsais} for computing the suffix and LCP arrays, the data structure of \cite{DBLP:conf/latin/BenderF00} for range minimum queries, and a B-tree of \cite{DBLP:conf/wea/Dinklage0H21} as an ordered dictionary to implement $M$.
}
of Algorithm~\ref{algo:lzend} against the in-memory implementation featured in the \emph{lz-end-toolkit} by~\cite{DBLP:conf/dcc/KempaK17}.
Here, the maximum phrase length remains unbounded ($h := \infty$).
The results are given in Table~\ref{tab:lzend-bench}.

On most inputs, our implementation is much faster (up to nearly twice as fast on \emph{english.1024MB}) than the \emph{lz-end-toolkit}, indicating that the lazy evaluation of predecessor and successor queries as well as the removed layer of indirection via the suffix array $A$ can be very beneficial.

It stands out, however, that the \emph{lz-end-toolkit} is faster than Algorithm~\ref{algo:lzend} on highly repetitive inputs (namely \emph{cere} and \emph{einstein.en.txt}).
The reason is that it is implicitly tuned for this case:
the temporary removal of phrase $f_{z-1}$ from $M$ is beneficial for the case that merges occur frequently.

We can easily tune Algorithm~\ref{algo:lzend} for this case as well by re-arranging the candidate search to first look for merges, including a preliminary removal of $f_{z-1}$ from $M$ to take advantage of.
However, pointing at the results given in Table~\ref{tab:lzend-bench}, we conjecture that this slows down the algorithm in the general case.

\begin{algorithm}[p]
\DontPrintSemicolon
\SetKwFunction{FParse}{\textsc{LZ-End}}
\SetKwFunction{FLeftPhrase}{\textsc{LexSmallerPhrase}}
\SetKwFunction{FRightPhrase}{\textsc{LexGreaterPhrase}}
\SetKwFunction{FFindCandidate}{\textsc{FindCopySource}}
\begin{small}
\Fn{\FParse{$T$}}{
    \tcc{build index and initialize}
    $A \gets $ suffix array of $\overleftarrow{T}$,
    $H \gets $ LCP array of $\overleftarrow{T}$\ (with rmq support)\;
    $A' \gets $ array of length $n$\;
    \lFor{$i \gets 0$ \KwTo $n-1$}{
        $A'[n-A[i]-1] = i$
    }
    discard $A$ and $\overleftarrow{T}$\;
    $f_0 \gets (0,~0,~\varepsilon)$,
    $f_1 \gets (0,~1,~T[0])$,
    $z \gets 1$,
    $M \gets \emptyset$\;
    \For{$i \gets 1$ \KwTo $n-1$}{\label{algo:lzend:main-loop}
        $i' \gets A'[i-1]$ \tcp{suffix array neighbourhood of $i-1$ in $\overleftarrow{T}$}
        $p_1 \gets \bot$,
        $p_2 \gets \bot$\;
        \tcc{find candidates}
        \FFindCandidate{\FLeftPhrase}\;\label{algo:lzend:left}
        \If{$p_1 = \bot$ \Or $p_2 = \bot$}{
            \FFindCandidate{\FRightPhrase}\;\label{algo:lzend:right}
        }
        \tcc{case distinction according to Lemma~\ref{lemma:kk17}}
        \If{$p_2 \neq \bot$}{
            \tcc{merge phrases $f_{z-1}$ and $f_z$}
            $M \gets M \setminus \{ (A'[i-|f_z|-1], \ast) \}$ \tcp{unmark phrase $f_{z-1}$}\label{algo:lzend:unmark}
            $f_{z-1} \gets (p_2, ~|f_z| + |f_{z-1}| + 1, ~T[i])$\;
            $z \gets z-1$\;
        }\ElseIf{$p_1 \neq \bot$}{
            \tcc{extend phrase $f_z$}
            $f_{z} \gets (p_1, ~|f_z| + 1, ~T[i])$\;
        }\Else{\label{algo:lzend:new}
            \tcc{begin new phrase}
            $M \gets M \cup \{ i', z \}$ \tcp{lazily mark phrase $f_z$}
            $f_{z+1} \gets (0, ~1, ~T[i])$\;
            $z \gets z+1$\;
        }
    }
    \Return $(f_1, \dots, f_z)$\;
}
\;
\Fn{\FFindCandidate{$f$}}{
    $(j', p, \ell) \gets f(i')$\;
    \If{$\ell \geq |f_z|$}{
        $p_1 \gets p$\;\label{algo:lzend:extend}
        \If{$i > |f_z|$}{
            \lIf{$p = z-1$}{
                $(j', p, \ell_L) \gets f(j')$
            }\label{algo:lzend:left-phrase2}
            \lIf{$\ell \geq |f_z| + |f_{z-1}|$}{
                $p_2 \gets p$\label{algo:lzend:merge}
            }
        }
    }
   \Return $(p_1, p_2)$\;
}
\Fn{\FLeftPhrase{$i'$}}{
    \If{$M$ contains the predecessor $j' \leq i'-1$ with $j' \mapsto p$}{
        \Return $(j',p, H[\text{rmq}(j'+1, i')])$\;
    }\lElse{
        \Return $(0,0,0)$
    }
}
\Fn{\FRightPhrase{$i'$}}{
    \If{$M$ contains the successor $j' \geq i'+1$ with $j' \mapsto p$}{
        \Return $(j',p, H[\text{rmq}(i'+1, j')])$\;
    }\lElse{
        \Return $(0,0,0)$
    }
}
\end{small}
\vspace{1ex}
\caption{
    Algorithm to compute the LZ-End parsing for a string $T \in \Sigma^n$.
    Note that here, we use zero-based indexing to more closely resemble the implementation.
    For $i \in [0,n-1]$ and $p \in [0,z]$, we say that $i \mapsto p$ iff position $i$ is marked in $M$ and phrase $f_p$ ends at the corresponding location.
}
\label{algo:lzend}
\end{algorithm}

\section{Engineering RLZSA}
\label{appendix:rlzsa}
Computing the RLZ-compressed suffix array (RLZSA) according to~\cite{DBLP:conf/dcc/PuglisiZ21} requires time and space $\Oh{n}$.
More precisely, it requires five integer arrays of length $n$ in RAM.
This results in a much higher memory consumption than that of the $r$-index or Move-$r$, which need only $\Oh{|\mathsf{PFP}|}$ space, where $\mathsf{PFP}$ is a prefix-free parsing of the input $T$ with $|\mathsf{PFP}| \ll n$.
This motivates our optimized construction algorithm, which requires only $\Oh{r}$ space and $\Oh{r^{1 - \epsilon}n^\epsilon}$ time for a parameter $\epsilon \in [0, 1]$ (see Appendix~\ref{appendix:rlzsa_ref_constr}), making RLZSA practical for indexing large texts.

In order to reason about our optimized implementation, we first briefly summarize the original implementation from \cite{DBLP:conf/dcc/PuglisiZ21} in Appendix~\ref{appendix:rlzsa_original}.
Then, we discuss our adjusted set of RLZSA index data structures, our new reference construction algorithm and practical optimizations for the RLZ parsing algorithm.

\begin{definition}
	\label{def:rlz}
	Given a reference $R \in \Sigma^*$, we call $\langle s_1, l_1 \rangle, \langle s_2, l_2 \rangle, \dots, \langle s_z, l_z \rangle$ a \emph{Relative Lempel-Ziv (RLZ) parsing of $T$ w.r.t.\ $R$} if $\sum_{i = 1}^z \max(1, l_i) = n$ and $T[p_i, p_i + l_i) = R[s_i, s_i + l_i)$ is the longest prefix of $T_{p_i}$ that occurs in $R$, if it exists, or $\langle s_i, l_i \rangle = \langle T[p_i], 0 \rangle$, else, for each $i \in [1, z]$ and $p_i = 1 + \sum_{j = 1}^{i - 1} \max(1, l_j)$.
\end{definition}
\begin{definition}
	Let $S$ and $x$ be sequences and let $k \geq 1$ be an integer.
    Then, $\kmerset_k(S)$ denotes the set of $k$-mers that occur in $S$ and $\#S(x)$ denotes the number of occurrences of $x$ in $S$.
\end{definition}

\subsection{Summary of the original RLZSA Implementation}
\label{appendix:rlzsa_original}
The RLZSA index described in \cite{DBLP:conf/dcc/PuglisiZ21} consists of the $r$-index -- without a data structure for computing $\Phi$ and without suffix array samples -- and the RLZSA, represented using the following arrays:

\begin{itemize}
	\item $\R$ -- the reference $R$ stored in $\Oh{|R| \lg n}$ bits,
	\item $\Sarr[1~..~z] = [s_1, \dots, s_z]$ -- the phrase source positions in $R$ or literal phrases, respectively, stored in $\Oh{z \lg n}$ bits,
	\item $\PL[1~..~z] = [l_1, \dots, l_z]$ -- the phrase lengths, stored also in $\Oh{z \lg n}$ bits and
	\item $\PS[1~..~\lfloor z/a \rfloor]$ -- samples of phrase starting positions where $\PS[i] = p_{ai}$, stored in $\Oh{z/a \lg n}$ bits.
\end{itemize}

Here, $a \geq 1$ is an integer sampling parameter.
We set $a := 64$ as described by the authors.
To reduce space usage in practice, the copy phrase length has been limited to $2^{16}$.
$R$ is constructed by dividing $\Ad$ into consecutive segments $S_1,\dots,S_{n'}$ of size $s$, scoring those, and iteratively adding a segment that maximizes the score until $R$ has reached its target size.
A segment's score rises with the frequencies in $A^d$ of the new $k$-mers that it adds to $R$.
More formally, the score of segment $S_i = \Ad[is~..~(i + 1)s)$ is
\begin{equation*}
\end{equation*}
where $C \subseteq [1, n']$ is the set containing the indices of already chosen segments.

The reference sizes in \cite{DBLP:conf/dcc/PuglisiZ21} have been tuned manually for each input.
Regarding the RLZ parsing, it is modified such that every referencing phrase is preceded by a literal phrase.
This simplifies the query procedure and reduces query time in the case that the start of the interval to extract lies at the end of a long series of long copy phrases. 

\subsection{Index Data Structures}
\label{appendix:rlzsa_data_structures}

We begin by discussing our choice of index data structures.
It consits of the arrays

\begin{itemize}
	\item $\R$ -- the reference $R$ stored in $\Oh{|R| \lg n}$ bits,
	\item $\PT[1~..~z]$ -- phrase types, where $\PT[i] := 1 \Leftrightarrow l_i = 0$ stored as a bit vector with constant-time rank/select support in $z+\oh{z}$ bits,
	\item $\LP[1~..~z_l]$ -- literal phrases, where $\LP[i] := s_j$ with $j = \PT.\select_1(i)$, stored in $\Oh{z_l \lg n}$ bits (where $z_l$ is the number of literal phrases),
	\item $\SR[1~..~z_c]$ -- phrase source positions in $R$, where $\SR[i] := s_j$ with $j = \PT.\select_0(i)$ stored in $\Oh{z_c \lg |R|}$ bits (where $z_c$ is the number of copy phrases),
	\item $\CPL[1~..~z_c]$ -- copy phrase lengths, where $\CPL[i] := l_j$ with $j = \PT.\select_0(i)$ stored in $\Oh{z_c \lg (n / z)}$ bits) and
	\item $\SCP[1~..~\lfloor z_c / a \rfloor]$ -- samples of copy phrase starting positions in $T$, where $\SCP[i] := p_j$ with $j = \PT.\select_0(ai)$, stored as an s-array~\cite{DBLP:conf/alenex/OkanoharaS07,sdsl} using $(z_c / a) \lg (n a /z_c) + 2 z_c / a + \oh{z_c / a}$ bits.
\end{itemize}

Here, $a \geq 1$ is an integer sampling parameter.
In practice, we use $a = 4$.
Since $|R| \ll n$ in practice, storing $\ceil{\lg n}$ bits per value in $\Sarr$ is wasteful.
Therefore, we split $\Sarr$ up into two arrays $\SR$ and $\LP$ and the bit vector $\PT$.
Then, we can store $\SR$ with $\ceil{\lg |R|}$ bits per value.
As in \cite{DBLP:conf/dcc/PuglisiZ21}, we limit the copy phrase length to $2^{16}$ such that we can store $\CPL$ using 16 bits per value.
Finally, we replace copy phrases of length one with literal phrases.
This reduces the number of cache misses caused by lookups in $R$ and reduces the index size, because one value in $\SR$ and $\CPL$, respectively, and $1/a$ values in $\SCP$ are replaced by one value in $\LP$.

Additionally, we store $\MLF$, $L'$ from Move-$r$~\cite{DBLP:conf/wea/Bertram0N24} and $\RSLap$ from Appendix~\ref{appendix:rank_select} to compute the suffix array interval of a pattern.
Finally, we store the array $\SAs[1..r']$, where $\SAs[i] = \A[\MLF.p[i]]$ and maintain $z(i)$ and $\hat{b}'_{z(i)}$ during the backward search phase of a locate query, where $z(i) = A[b_i]$ is defined analogously to~\cite[Definition~12]{DBLP:conf/wea/Bertram0N24}.
Then, we can compute $A[b] = \SAs[\hat{b}'_{z(1)}] - z(1)$ in constant time after the backward search.
This eliminates the need for the rule that each copy phrase has to be preceded by a literal phrase, as we do not have to decode the region between the suffix array interval and the last literal phrase before it.
Additionally, this reduces the overall number of phrases by a factor up to two.

\subsection{Queries}
\label{appendix:rlzsa_queries}

We now show how to answer queries using our data structures.
Storing $\SR$, $\LP$ and $\PT$ instead of $\Sarr$ makes the query procedure more complicated, because we now have to also maintain the indices $\xcp$ and $\xlp$ of the current copy- and literal phrases, respectively, i.e, $x$ is the index of the phrase containing $b + 1$, $\xcp$ is the index of the last copy-phrase starting at or before $b + 1$ and $\xlp$ is the index of the last literal phrase at or before $b + 1$.

Lastly, we need the starting position $p_x$ of the phrase containing $b + 1$.

\begin{figure}[t]
\begin{center}
\includegraphics[width=0.6\textwidth]{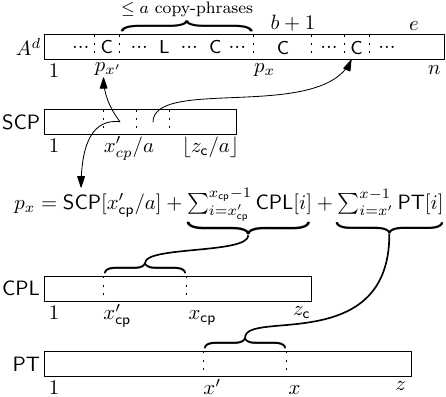}
\end{center}
\caption{
    Illustration of a the initialization phase of an RLZSA query. "\textsf{C}" and "\textsf{L}" indicate that the interval in $A^d$ is a copy/literal phrase.
}
\label{fig:rlzsa-query}
\end{figure}

To compute those values, we at first compute a lower bound $\xcp' \leftarrow a \cdot \SCP.\rank_1(b)$ for $\xcp$ (see Figure~\ref{fig:rlzsa-query}).
This takes time $\Oh{\lg(na / z_c)}$~\cite{sdsl}.
$x' \leftarrow \PT.\select_0(\xcp')$ gives us the phrase index of the $\xcp'$-th copy phrase.
Then, we compute its starting position $p_{x'} \leftarrow \SCP.\select_1(\xcp' / a)$ in constant time~\cite{sdsl}.
Using $x'$, $\xcp'$ and $p_{x'}$, we can then traverse RLZSA to the right until the $x'$-th phrase contains $b + 1$.
More formally, we compute $x$, $\xlp$ and $p_x$ by the equation in Figure~\ref{fig:rlzsa-query}.
Finally, $\xlp \leftarrow x - \xcp$ gives us the current literal phrase index.
This takes overall time $\Oh{a}$ if we use $\PT.\select_0$ queries to skip blocks of consecutive literal phrases in $\Oh{1}$ time.
Decoding $A(b, e]$ using $x$, $\xcp$, $\xlp$ and $p_x$ is similar and takes optimal time $\Oh{|[b, e]|}$.

\subsection{Index Construction}
\label{appendix:rlzsa_ref_constr}

In the following, we describe the construction of RLZSA.
After constructing $A$, $\MLF$, $L'$, $\RSLap$ and $\SAs$ in $\Oh{n + r \lg r}$ time, $\Oh{n}$ external space and $\Oh{|\mathsf{PFP}|}$ space in the RAM using Big-BWT~\cite{pfp}, we will only access $A$ in external memory.
Our algorithm for constructing $R$ follows a similar method as the algorithm presented in \cite{DBLP:conf/dcc/PuglisiZ21}, but reduces the running time from $\Oh{n}$ to $\Oh{r^{1 - \epsilon} n^\epsilon}$ for a parameter $\epsilon \in [0, 1]$.
The space is also reduced from $\Oh{n}$ to $\Oh{r}$.
Instead of splitting $\Ad$ into consecutive segments of size $s$, we consider arbitrary segments of size $s$ and set the $k$-mer length to $k = 1$, because when setting the segment size optimally ($s = 3072$), the number of phrases in RLZSA only rises when increasing $k$ beyond 1.
Choosing $k = 1$ also simplifies the computation of all $1$-mer frequencies.
We show how this can be done efficiently in the following section.
Then, we describe the construction of the reference.

\subsubsection{Computing the frequencies of all values in $\Ad$}

To see how we can compute the frequencies of all values ($1$-mers) in $\Ad$, we need the following lemma from~\cite{DBLP:conf/soda/GagieNP18}.
\begin{definition}
    Let $l_1, \dots, l_r$ be the starting positions of the runs in $L$, and let $l_{r + 1} = n + 1$.
    Let $\Phi$ be a function such that $\Phi(A[i]) = A[(i - 1) \mod n]$.
\end{definition}
\begin{lemma}[\cite{DBLP:conf/soda/GagieNP18}, Lemma 3.5]\label{lemma:phi}
    Let $\{u_1, u_2, \dots, u_{r+1}\} = \{\A[l_1], \A[l_2], \dots, \A[l_r], n+1\}$ and $u_1 < u_2 < \dots < u_{r+1} = n+1$. Then $\Phi(i) = \Phi(u_x) + (i-u_x)$ for $u_x \le i < u_{x+1}$.
\end{lemma}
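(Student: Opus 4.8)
The statement is the standard fact that $\Phi$ is piecewise linear with unit slope and $r$ pieces, so the plan is to isolate one local increment and then telescope. First I would prove the \emph{local} claim: if a text position $i$ is not among $u_1,\dots,u_r$, then $\Phi(i)=\Phi(i-1)+1$ (reading indices cyclically). Granting this, fix $x\in[1,r]$ and $i$ with $u_x\le i<u_{x+1}$: for $i=u_x$ the asserted identity is trivial, and for $u_x<i<u_{x+1}$ the position $i$ lies strictly between two consecutive members of $\{u_1,\dots,u_{r+1}\}$, hence is not among $u_1,\dots,u_r$, so the local claim applies; summing the increments from $u_x$ up to $i$ yields $\Phi(i)=\Phi(u_x)+(i-u_x)$, which is the claim. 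The member $u_{r+1}=n+1$ serves only to fix the right endpoint of the last block $[u_r,u_{r+1})=[u_r,n]$, and since $u_r=\max_j\A[l_j]$, every $i$ with $u_r<i<u_{r+1}$ is automatically different from each $\A[l_j]$.

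For the local claim I would move to BWT rank space via the LF-map $\LF(k)=\A^{-1}[\A[k]-1]$ (read cyclically), using two standard facts: $\A[\LF(k)]=\A[k]-1$, and $\LF$ carries the SA-rank of a suffix to the SA-rank of that suffix extended by one character to the left. Put $k:=\A^{-1}[i]$, so $\Phi(i)=\A[k-1]$, and $k':=\LF(k)$, which is the rank of the suffix starting at $i-1$, so $\Phi(i-1)=\A[k'-1]$. Applying the first fact in the form $\A[\LF(k-1)]=\A[k-1]-1$ and using that $\A$ is injective, the goal $\Phi(i-1)=\A[k-1]-1$ becomes $k'-1=\LF(k-1)$, i.e.\ $\LF(k)=\LF(k-1)+1$.

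It remains to see that the hypothesis forces exactly this. Because $\{u_1,\dots,u_r\}=\{\A[l_1],\dots,\A[l_r]\}$ and $\A$ is a bijection, $i\notin\{u_1,\dots,u_r\}$ says precisely that $k=\A^{-1}[i]$ is not the starting rank of a run of $L$ (in particular $k\ge 2$, since $l_1$ is always a run start), hence $L[k]=L[k-1]=:c$. With $\LF(m)=C[L[m]]+\rank_{L[m]}(L,m)$ and $C[c]$ the number of symbols of $L$ smaller than $c$, the equality $L[k-1]=L[k]$ makes the $C$-terms agree and increases the $\rank$-term by exactly one, so $\LF(k)=\LF(k-1)+1$, as needed (only this direction is used; the converse can fail precisely when $k-1,k$ straddle the boundary between two symbols' blocks, but there $k$ is a run start anyway). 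The step that genuinely needs care is this passage between the two coordinate systems: $\Phi$ lives on text positions while "run start" is a statement about SA-ranks, and it is the identity $\A[\LF(k)]=\A[k]-1$ together with $\A$ being a bijection that transports the run-boundary description to the breakpoint set $\{\A[l_j]\}$ on positions; the leftover cyclic bookkeeping (the $\bmod n$ in the definition of $\Phi$, the $1\mapsto n$ in $\LF$) is harmless because position $1$ is always a run start and hence already sits in that breakpoint set. This reproves Lemma~3.5 of~\cite{DBLP:conf/soda/GagieNP18}.
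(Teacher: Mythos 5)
The paper does not prove this lemma itself --- it is imported verbatim from Gagie et al.\ \cite{DBLP:conf/soda/GagieNP18} --- so there is no in-paper proof to compare against; your argument is correct and is essentially the standard one from that source. You correctly reduce the claim to the local increment $\Phi(i)=\Phi(i-1)+1$ for $i\notin\{u_1,\dots,u_r\}$, which via $\A[\LF(k)]=\A[k]-1$ and injectivity of $\A$ amounts to $\LF(k)=\LF(k-1)+1$ whenever $L[k]=L[k-1]$, and then telescope within each block $[u_x,u_{x+1})$ (note only that your closing remark conflates text position $1$ with BWT rank $1$; what you actually need, and have, is that the telescoping never descends below $u_x\ge 1$, so $\Phi(j-1)$ is always evaluated at a position $\ge 1$).
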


We now show the following.
\begin{theorem}
	\label{thm:freq_sad}
	($i$) Given $\IPhi = (u_1, \Phi(u_1)), \dots, (u_r, \Phi(u_r))$, we can compute the frequencies of all values in $\Ad$ in $\Oh{r}$ expected time and space, and ($ii$) there are $\leq r + 1$ distinct values in $\Ad$.
\end{theorem}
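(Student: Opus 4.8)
The plan is to exploit the piecewise-linear structure of $\Phi$ exposed by Lemma~\ref{lemma:phi}. First I would note that for $i \in [2,n]$ we have $A[i-1] = \Phi(A[i])$, hence $\Ad[i] = A[i] - \Phi(A[i])$, and that under the substitution $j = A[i]$ (a bijection from $[2,n]$ onto $[1,n]\setminus\{A[1]\}$) the multiset $\{\Ad[i] : i \in [2,n]\}$ equals $\{j - \Phi(j) : j \in [1,n]\}$ minus one copy of $A[1] - \Phi(A[1])$. Adding back the special entry $\Ad[1] = A[1]$, the whole task reduces to: (a) for every $v$, count the $j \in [1,n]$ with $j - \Phi(j) = v$; and (b) apply one $\Oh{1}$ correction (remove one count of $A[1] - \Phi(A[1])$, add one count of $A[1]$; in the standard sentinel-terminated setting $A[1] = n$ and $\Phi(A[1]) = \Phi(n)$ is read off from the last pair of $\IPhi$).

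The key step is the counting in (a). By Lemma~\ref{lemma:phi}, on each interval $[u_x, u_{x+1})$ the map $\Phi$ is affine with slope $1$, so $j - \Phi(j) = u_x - \Phi(u_x) =: d_x$ is \emph{constant} there, and that interval contains exactly $w_x := u_{x+1} - u_x$ integers (taking $u_{r+1} := n+1$, and using $u_1 = 1$ so that the $r$ intervals partition $[1,n]$). Hence the value $v$ occurs exactly $\sum_{x \,:\, d_x = v} w_x$ times in $\{j - \Phi(j) : j \in [1,n]\}$. Only the $r$ offsets $d_1, \dots, d_r$ ever arise, so at most $r$ distinct values occur; after the $\Oh{1}$ correction for $\Ad[1]$, at most $r+1$ distinct values remain in $\Ad$, which is claim~($ii$).

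For claim~($i$), I would implement this directly: scan $\IPhi$ once, and since the $u_x$ are given in increasing order, produce the pairs $(d_x, w_x)$ on the fly in $\Oh{r}$ time; aggregate them into a hash table keyed by $d_x$ with accumulated weight $w_x$; then perform the $\Oh{1}$ correction. The table holds at most $r+1$ keys and absorbs the $r$ weighted updates in $\Oh{r}$ expected total time, so the whole computation runs in $\Oh{r}$ expected time and $\Oh{r}$ space. An array indexed by the value $v \in [-n,n]$ would instead cost $\Oh{n}$ space, which is why a hash table — and hence the word ``expected'' — is used.

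The underlying idea is a single clean observation, so the main obstacle is bookkeeping rather than depth: one has to be careful that (a) the intervals of Lemma~\ref{lemma:phi} genuinely cover all of $[1,n]$, i.e.\ $u_1 = 1$, which holds whenever $T$ is terminated by a unique smallest character (as is standard here), so no stray leftover interval $[1,u_1)$ inflates the distinct-value count; and (b) the definition $\Ad[1] := A[1]$ makes $\Ad$ differ from the fully cyclic sequence $j \mapsto j - \Phi(j)$ in exactly one position, so precisely one $\Oh{1}$ correction suffices and the value to correct ($A[1] - \Phi(A[1]) = d_r$ when $A[1] = n$) is recoverable from $\IPhi$. Everything else is a routine single pass over $\IPhi$.
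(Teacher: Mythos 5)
Your proof is correct and follows essentially the same route as the paper's: both rest on Lemma~\ref{lemma:phi} making $j - \Phi(j)$ constant on each interval $[u_x, u_{x+1})$, aggregate the resulting $r$ weighted pairs in a hash table in $\Oh{r}$ expected time, and account separately for $\Ad[1] = A[1]$ to obtain the $r+1$ bound. The only cosmetic difference is that the paper folds the removal of the spurious value $A[1] - \Phi(A[1])$ into the count of the last interval rather than applying it as a final $\Oh{1}$ correction.
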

\begin{proof}
    We compute a hash map $\HAd$ that maps $\langle \Ad[i] \rightarrow \#\Ad(\Ad[i]) \rangle$, for $i \in [1, n]$.
    We start by inserting $\langle A[1] \rightarrow 1 \rangle$ into $\HAd$.
    Then, we iterate with $x$ from $1$ to $r$.
    Each value $i \in [u_x, u_{x + 1})$ is mapped to $\Phi(i) = \Phi(u_x) + (i - u_x)$ by \cite[Lemma~3.5]{DBLP:conf/soda/GagieNP18}.
    Hence $\Ad[\ISA[i]] = i - \Phi(i) = u_x - \Phi(u_x)$ for $i \neq A[1]$.
    Let $v = u_x - \Phi(u_x)$ and $f = u_{x + 1} - u_x$, if $x \neq r$, and $f = u_{x + 1} - u_x - 1$, else (this avoids counting $A[1] - A[n]$).
    Now, we check, whether there is a mapping $\langle v \rightarrow f' \rangle \in \HAd$.
    If so, then we increment $f'$ by $f$.
    Else, we insert $\langle v \rightarrow f \rangle$ into $\HAd$.
    Since the intervals $[u_x, u_{x + 1})$ are disjoint, we consider $i = A[j]$ in exactly one iteration, for each $j \in [2, n]$.
    Hence, this algorithm correctly computes $\HAd$.
    
    Note that the number of mappings in $\HAd$ is equal to the number of distinct values in $\Ad$.
    This number is at most $\leq r + 1$, because we add at most one mapping for each one of the $r$ intervals $[u_x, u_{x + 1})$, and separately handling $\Ad[1] = A[1]$ adds at most one extra value.
    Thus, we have shown $(i)$ and $(ii)$.
\end{proof}

\subsubsection{Reference Construction}
\label{appendix:reference_construction}

Our algorithm uses $\HAd$ from Theorem~\ref{thm:freq_sad} to iteratively choose segments.
We maintain a balanced search tree $\Ts = \{\langle b_1, e_1 \rangle ,~ \dots,~ \langle  b_N, e_N \rangle\}$ with $1 < b_1 < e_1 < \dots < b_N < e_N < n$ to represent the current state of $R = \Ad[b_1 ~..~ e_1]\Ad[b_2 ~..~ e_2] \dots \Ad[b_N ~..~ e_N]$.
We also maintain that $\HAd$ maps $\langle \Ad[i] \rightarrow \#\Ad(\Ad[i]) \rangle$, if $\Ad[i] \notin R$, and $\langle \Ad[i] \rightarrow 0 \rangle$, else, for $i \in [1, n]$.
We set $t_R = \Oh{r}$ as the target size for $R$. 

\paragraph{Scoring segments.}
In each iteration, we consider $M = \Oh{(n/r)^\epsilon}$ (with $\epsilon \in [0, 1]$) random candidate segments $[l_1, r_1], \dots, [l_M, r_M] \subseteq [1, n]$ of fixed length $s$.
Given a candidate segment $[l_i, r_i]$, we at first shorten it from the left and/or the right (using a successor search over $\Ts$), such that it does not intersect the already chosen segments.
More precisely, we instead consider the segment $[l'_i, r'_i] = [l_i, r_i] \setminus \cup_{j \in [1, M]} [b_j, e_j]$.
This segment must be connected, because $|[b_j, e_j]| \geq s \, \forall j \in [1, M]$.
This takes time $\Oh{\lg N} = \Oh{\lg (t_R / s)}$.
If $[l'_i, r'_i] \neq \emptyset$, then we compute its score
\begin{align*}
	f([l'_i, r'_i]) &= \left(\sum_{x \in \kmerset_1(\Ad[l'_i ~..~ r'_i]) \setminus \kmerset_1(R)} \sqrt{\#\Ad(x)}\right) / |[l'_i, r'_i]| \\
    &= \left(\sum_{x \in \kmerset_1(\Ad[l'_i ~..~ r'_i])} \sqrt{\HAd[x]}\right) / |[l'_i, r'_i]|.
\end{align*}
This can be done in expected time $\Oh{|[l'_i ~..~ r'_i]|} = \Oh{s}$ by scanning over $\Ad[l'_i ~..~ r'_i]$ once and maintaining a temporary hashtable storing the already considered values in $\Ad[l'_i ~..~ r'_i]$.
Thus, scoring all segments takes $\Oh{M (s + \lg(t_R/s))}$ expected time.

\paragraph{Adding the best segment.}
Let $[l'_m, r'_m]$ be a segment that maximizes the score.
We update $\HAd$ to map $\langle \Ad[j] \rightarrow 0 \rangle$ for each $j \in [l'_m, r'_m]$ in $\Oh{s}$ expected time.
To reduce the number of segments stored in $\Ts$ and thereby also memory consumption, we merge $[l'_m, r'_m]$ with already chosen directly adjacent segments.
More precisely, if there exists a pair $\langle b_x, e_x \rangle \in \Ts$ with $e_x + 1 = l'_m$, then we remove $\langle b_x, e_x \rangle$ from $\Ts$ and set $l'_m \leftarrow b_x$.
Similarly, if there exists a pair $\langle b_y, e_y \rangle \in \Ts$ with $b_y - 1 = r'_m$, then we remove $\langle b_y, e_y \rangle $ from $\Ts$ and set $r'_m \leftarrow e_y$. Finally, we insert $\langle l'_m, r'_m \rangle$ into $\Ts$.
We stop as soon as $|R| \geq (1 - \epsilon') t_R$, where $\epsilon' \in [0, 1]$.
The search in $\Ts$ takes time $\Oh{\lg (t_R / s)}$ time.
Thus, adding one segment takes overall time $\Oh{s + \lg (t_R / s)}$.

\paragraph{Running time and memory consumption.}
If we assume that the expected length of the newly added segment $[l'_m, r'_m]$ (before merging) is $\Th{s}$, then this algorithm takes overall expected time $\Oh{(t_R / s) \cdot M \cdot (s + \lg (t_R / s))} = \Oh{r \cdot (n / r)^\epsilon} = \Oh{r^{1 - \epsilon} n^\epsilon}$ (for $\lg (t_R / s) = \Oh{s}$) and space $\Oh{r + t_R / s} = \Oh{r}$.
In practice, we set $t_R := \min(5.2 r, n / 3)$, $s := 3072$, $M := 5 (n / r)^\epsilon$, $\epsilon = 0.45$ and $\epsilon' = 1/20$.

\paragraph{Post processing.}
As a post-processing step, we close short gaps between long adjacent segments.
We maintain $\Ts$ and additionally a balanced search tree $\Tg$ initialized with $\Tg = \{\langle g_1, s_1 \rangle, \dots, \langle g_{N - 1}, s_{N - 1} \rangle\}$, where initially $g_i = b_i$ and $s_i = |[b_i, e_{i + 1}]| / |(e_i, b_{i + 1})|$ hold for $i \in [1, N - 1]$.
Each pair $\langle b_i, s_i \rangle$ represents the gap $(e_i, b_{i + 1})$ between the segments $[b_i, e_i]$ and $[b_{i+1}, e_{i+1}]$, and its score $s_i$ is the length $|[b_i, e_{i + 1}]|$ of the connected segment resulting from closing the gap relative to the cost $|(e_i, b_{i + 1})|$ for closing it, i.e, the length of the gap.
In $\Tg$, the pairs are ordered by their scores.

As long as $\Tg \neq \emptyset$, we iteratively consider the pair $\langle b_i, s_i \rangle$ with the highest score, and remove it from $\Tg$. We check whether we can close the gap $(e_i, b_{i+1})$ it represents without exceeding $t_R$ (using a successor search over $\Ts$).
If $|R| + |(e_i, b_{i+1})| \leq t_R$, then we close the gap by merging $[b_i, e_i]$ and $[b_{i + 1}, e_{i + 1}]$ into $[b_i, e_{i + 1}]$ in $\Ts$ and possibly update the scores and starting positions of the gaps $(e_{i - 1}, b_i)$ and $(e_{i + 1}, b_{i + 2})$ in $\Tg$ (if they exist, respectively) using searches over $\Tg$.
Since we consider and search for a constant number of gaps and segments per iteration, and each search over $\Ts$ and $\Tg$ takes time $\Oh{\lg (t_R / s)}$ time, the post-processing takes overall time $\Oh{(t_R / s) \lg (t_R / s)} = \Oh{r \lg r}$.

Finally, we build $R$ by iterating once over $\Ts$ in time $\Oh{r}$.

\subsubsection{Computing the RLZ Parsing}
\label{appendix:rlzsa_parsing}

To compute the RLZ parsing of $\Ad$ w.r.t.~$R$, we build the Move-$r$ index~\cite{DBLP:conf/wea/Bertram0N24} for $\overleftarrow{R}$.
However, since we only need to compute one occurrence in $R$ per RLZ phrase, we do not construct $\MPhi$ and $\SAPhi$.
Instead, we store the array $\SAs'[1~..~r']$~\cite{DBLP:conf/wea/Bertram0N24}.
Then, we can compute exactly one occurrence $\SAs'[\hat{e}'_{y(1)}] - y(1)$ per locate query.

Suppose we have computed the parsing up to phrase $i-1$ and want to compute the $i$-th phrase.
Recall from Definition~\ref{def:rlz} that we compute the longest prefix of $\Ad[p_i ~..~ n]$ that occurs in $R$ with a backward search using Move-$r$ over $\overleftarrow{R}$.
Let $j \in [1, n - i + 1]$ be the minimum length such that $A^d[p_i ~..~ p_i + j)$ has exactly one occurrence $o$ in $R$ if it exists.
Then $s_i = o$ will produce a valid RLZ parsing, hence we can abort the backward search after $j$ steps, compute $s_i$, and instead scan for $l_i$ in $R$, i.e,\ increment $j$ until $p_i + j = n \lor s_i + j = |R| \lor \Ad[p_i + j] \neq R[s_i + j]$ in order to get $j = l_i$.

\section{Engineering Rank/Select Data Structures for Move-$r$}
\label{appendix:rank_select}

In Move-$r$, we need a data structure to answer a particular combination of rank and select queries on a string $T \in \Sigma^n$ in constant time using at most $\Oh{n}$ words of space.
Given a character $c \in \Sigma$ and a position $i \in [1, n]$, we want to either
(1) compute $\select(T, c, \rank(T, c, i) + 1)$ or
(2) compute $\select(T, c, \rank(T, c, i))$.
The former can be considered a successor query, while the latter resembles a predecessor query for occurrences of a character.

While this is a classic use case for wavelet trees~\cite{DBLP:conf/soda/GrossiGV03}, we make use of the fact that the queries only ever happen in a specific combination.
In the following, we present two simple, but practically efficient data structures that improve upon~\cite{DBLP:conf/wea/Bertram0N24}.

Let $\sigma = |\Sigma|$ be the size of the alphabet.
Our first data structure is aimed at the case where $\sigma$ is small ($\sigma = \Oh{1}$), while the second one is aimed at large alphabets ($\sigma = n^{\Oh{1}}$).
The latter is designed particularly for the construction of the RLZ parsing for RLZSA (see Appendix~\ref{appendix:rlzsa_parsing}) because there, we use Move-$r$ over $\overleftarrow{R}$ featuring a large alphabet.

\subsection{Small Alphabets}

We first consider small alphabets, i.e., we assume $\sigma = \Oh{1}$.
This scenario allows us to afford precomputing the answers for all possible query characters $c$ for a subset (sampling) of positions.
More precisely, given an integer sampling parameter $s \geq 1$, we store two two-dimensional arrays $X[1~..~\floor{n / B}][1~..~\sigma]$ and $Y[1~..~\floor{n / B}][1~..~\sigma]$ with $B = \ceil{\sigma s}$ as follows:

\begin{align*}
    X[b][c] &:= \begin{cases}
        \select(T, c, \rank(T, c, b B) + 1) & $if $c$ occurs in $T[b B ~..~ n]\\
        \infty & $otherwise$
    \end{cases}\\
    Y[b][c] &:= \begin{cases}
        \select(T, c, \rank(T, c, b B)) & $if $c$ occurs in $T[1 ~..~ b B - 1]\\
        -\infty & $otherwise$
    \end{cases}
\end{align*}
This data structure can be stored in space $\Oh{\sigma n / B} = \Oh{n / s}$ and can be constructed in time $\Oh{n + \sigma n / B} = \Oh{(1 + 1/s) n}$ time by scanning over $T$ once in both directions.

Given a query of the first type (successor) with position $i$ and character $c$, we start by computing the first block $b = \ceil{i / B}$ starting after $i$ and set $p := X[b][c]$.
Now, we scan over the preceding block in reverse and look for an occurrence of $c$, i.e., for $j$ from $b \cdot B$ to $i$, we set $p := j$ if $T[j] = c$.
Finally, we return $p$.
The second type of query (predecessor) can be answered similarly using $Y$.
This takes overall time $\Oh{B} = \Oh{\sigma s} = \Oh{1}$ and requires $\Oh{n}$ words of space since $\sigma$ and $s$ are constants.
In practice, we set $s := 4$.

We note that by using this data structure to implement $\RSLap$ in Move-$r$, we get optimal time $\Oh{m}$ for count and $\Oh{m + \occ{}}$ for locate queries in overall $\Oh{r}$ words of space.

\subsection{Large Alphabets}
\label{appendix:rank_select_large}

We now consider large alphabets, i.e., $\sigma=n^{\Oh{1}}$.
The previously shown data structure is not suitable because answering queries takes time $\Oh{\sigma s}$.
However, since the alphabet is large, we can exploit the fact that most characters occur infrequently to speed up queries.
Particularly, if a character has only few occurrences, then storing their positions in ascending order and scanning over (or performing binary search on) them for querying is sufficiently fast in practice.
For frequent characters, using an s-array~\cite{DBLP:conf/alenex/OkanoharaS07,sdsl} to marking the occurrences of in $T$ is faster.
Our rank/select data structure combines both of these methods to achieve good performance in every case.
It consists of the arrays

\begin{itemize}
    \item $C[1~..~\sigma + 1]$ with $C[c] := |\{T[i] < c \, | \, i \in [1, n]\}|$,
	\item $O[1~..~n]$ with $O[C[c] + j] := i$ such that $T[i]$ is the $j$-th occurrence of $c \in \Sigma$ and
	\item $S[1~..~\sigma]$ where $S[c]$ is an s-array marking the occurrences of $c$ in $T$, if $\#T(c) > 512$.
\end{itemize}

requiring $\Oh{n (\lg n + \mathcal{H}_0(T))}$ bits of space, where $\mathcal{H}_0(T)$ denotes the zeroth-order entropy of $T$.
The space is dominated by the s-arrays that are also used in \cite[Appendix~B]{DBLP:conf/wea/Bertram0N24} and is asymptotically no larger than a Huffman-shaped wavelet tree unless $\mathcal{H}_0(T) = \oh{\lg n}$.

We can answer $\select(T, c, i) = O[C[c] + i]$ in constant time.
To answer $\rank(T, c, i)$, we at first consider the number $o = C[c + 1] - C[c]$ of occurrences of $c$ in $T$.
If $o > 512$, we use the s-array to compute $\rank(T, c, i) = S[c].\rank_1(i)$ in time $\Oh{\lg (n / \#T(c))}$~\cite{sdsl}.
Otherwise, we compute $x := \min \{j \in [C[c], C[c+1]) \, | \, O[j] > i \}$ if it exists (using binary search if $o > 16$ or by linearly scanning otherwise).
If $x$ does not exist, then we output $o$.
Otherwise, we output $x - C[c]$.
This takes constant time because $o \leq 512 = \Oh{1}$.

\end{document}